\def\maxlink{\mathop{\mathrm{max-link}}}
\def\iid{\mathop{\mathrm{i.i.d.}}}
\def\minpow{\mathop{\mathrm{min-power}}}
\def\minpow{\mathop{\mathrm{min-power}}}
\def\SR{\mathop{\{S\hspace{-0.09cm} \rightarrow \hspace{-0.09cm} R\}}}
\def\RD{\mathop{\{R\hspace{-0.09cm} \rightarrow \hspace{-0.09cm} D\}}}
\def\RR{\mathop{\{R\hspace{-0.09cm} \rightarrow \hspace{-0.09cm} R\}}}
\def\SD{\mathop{\{S\hspace{-0.09cm} \rightarrow \hspace{-0.09cm} D\}}}
\def\TD{\mathop{\{T\hspace{-0.09cm} \rightarrow \hspace{-0.09cm} D\}}}
\def\trace{\mathop{\mathrm{tr}}}  
\newcommand\norm[1]{\left|#1\right|}
\newtheorem{proposition}{Proposition}
\newtheorem{remark}{Remark}
\newtheorem{proof}{Proof}
\journal{Ad Hoc Networks}
\newenvironment{list4}{
  \begin{list}{$\bullet$}{
      \setlength{\itemsep}{0.05cm}
      \setlength{\labelsep}{0.2cm}
      \setlength{\labelwidth}{0.3cm}
      \setlength{\parsep}{0in}
      \setlength{\parskip}{0in}
      \setlength{\topsep}{0in}
      \setlength{\partopsep}{0in}
      \setlength{\leftmargin}{0.17in}}}
      {\end{list}}
\begin{document}
\begin{frontmatter}

%
%
%
%
\title{Relay-Pair Selection in Buffer-Aided Successive Opportunistic Relaying using a Multi-Antenna Source}

\author[First]{Themistoklis~Charalambous}
\author[Second]{Su~Min~Kim\corref{mycorrespondingauthor}}
\author[Third]{Nikolaos~Nomikos}
\author[Fourth]{Mats~Bengtsson}
\author[Fifth]{Mikael Johansson}

\cortext[mycorrespondingauthor]{Corresponding author. This work was supported in part by the MSIT, Korea, under the ITRC support program(IITP-2018-0-01426) supervised by the IITP.}

\address[First]{Department of Electrical Engineering and Automation, Aalto University, Espoo, Finland. Email: {\tt  themistoklis.charalambous@aalto.fi}.}
\address[Second]{Department of Electronics Engineering, Korea Polytechnic University, Siheung, Korea. Email: {\tt suminkim@kpu.ac.kr}.}
\address[Third]{Department of Information and Communication Systems Engineering, University of the Aegean, Samos, Greece. Email: {\tt nnomikos@aegean.gr}.}
\address[Fourth]{Department of Information Science and Engineering, School of Electrical Engineering and Computer Science, KTH Royal Institute of Technology, Stockholm, Sweden. Email:  {\tt mats.bengtsson@ee.kth.se}.}
\address[Fifth]{Department of Automatic Control, School of Electrical Engineering and Computer Science, KTH Royal Institute of Technology, Stockholm, Sweden. Email:  {\tt mikaelj@kth.se}.}

%
%
%
%
\begin{abstract}
We study a cooperative network with a buffer-aided multi-antenna source, multiple half-duplex (HD) buffer-aided relays and a single destination. Such a setup could represent a cellular downlink scenario, in which the source can be a more powerful wireless device with a buffer and multiple antennas, while a set of intermediate less powerful devices are used as relays to reach the destination. The main target is to recover the multiplexing loss of the network by having the source and a relay to simultaneously transmit their information to another relay and the destination, respectively. Successive transmissions in such a cooperative network, however, cause inter-relay interference (IRI). First, by assuming global channel state information (CSI), we show that the detrimental effect of IRI can be alleviated by precoding at the source, mitigating or even fully cancelling the interference. A cooperative relaying policy is proposed that employs a joint precoding design and relay-pair selection. Note that both fixed rate and adaptive rate transmissions can be considered. For the case when channel state information is only available at the receiver side (CSIR), we propose a relay selection policy that employs a phase alignment technique to reduce the IRI. The performance of the two proposed relay pair selection policies are evaluated and compared with other state-of-the-art relaying schemes  in terms of outage and throughput. The results show that the use of a powerful source can provide considerable performance improvements.
\end{abstract}

\begin{keyword}
Opportunistic relaying\sep buffer-aided relays\sep precoding\sep phase-alignment\sep interference cancellation\sep interference mitigation
\end{keyword}

\end{frontmatter}


%
%
%
%
\section{Introduction}

The communication paradigm of cooperative relaying has recently received considerable attention due to its effectiveness in alleviating the effects of multipath fading, pathloss and shadowing, and its ability to deliver improved performance in cognitive radio systems and wireless sensor networks. Relay-assisted cellular networks are a promising solution for enhancing coverage and are included in standards, such as IEEE 802.16j/m and 3GPP Long Term Evolution-Advanced (LTE-A).

 Several protocols for cooperative relaying were presented in \cite{LAN} where the gains in transmit and receive diversity were studied. In multi-relay networks, simultaneous transmissions by the relays are in general difficult to handle; towards this end, opportunistic relay selection has been suggested in \cite{BLE} to improve the resource utilization and to reduce the hardware complexity. Stemming from the relay selection concept, many studies have proposed improved selection techniques (see, \emph{e.g.}, \cite{BLE1, KAR, KRI1}).
Traditional HD relaying schemes partition the packet transmission slot into two phases, where the transmission on the Source-Relay $\SR$ link happens in the first phase, and the transmission on the Relay-Destination $\RD$ link occurs in the second phase. However, this relaying scheme limits the maximum achievable multiplexing gain to $0.5$, which also results in bandwidth loss. 

 In order to overcome such multiplexing and bandwidth limitations, several techniques have been proposed in the literature (see, for example, \cite{DIN} and references therein). Among them, the successive relaying scheme in \cite{FAN} incorporates multiple relay nodes and allows concurrent transmissions between source-relay and relay-destination to mimic an ideal full-duplex transmission. However, this scheme targets scenarios with a long distance between the relays and thus inter-relay interference is not considered. An extension of this work is discussed in \cite{chao_ISIT}, where the authors assume that IRI is strong (in co-located or clustered relays) and can always be decoded at the affected nodes; this decoded IRI is exploited in a superposition coding scheme that significantly improves the diversity-multiplexing trade-off performance of the system. 
 
In earlier studies, relays were assumed to lack data buffers and relay selection was mainly based on the $\max-\min$ criterion and its variations (see, for example, \cite{BLE,BLE1,KAR,KRI1}). Here, the relay that receives the source signal is also the one that forwards the signal to the destination. With the adoption of buffer-aided relays, this coupling is broken, since different relays could be selected for transmission and reception, thus allowing increased degrees of freedom. Buffering at the relay nodes has been shown to be a promising solution for cooperative networks and motivates the investigation of new protocols and transmission schemes (see \cite{survey} and \cite{ZLA_CM} for an overview). Ikhlef \emph{et al.} \cite{IKH2} proposed a novel criterion based on $\max-\max$ relay selection (MMRS), in which the relay with the best source-relay $\SR$ link is selected for reception and the relay with the best relay-destination $\RD$ link is selected for transmission on separate slots.  In \cite{KRICHAR}, at each slot the best link is selected among \emph{all} the available $\SR$ and $\RD$ links, as a part of the proposed $\maxlink$ policy, thus offering an additional degree of freedom to the network, while buffer-aided link selection was studied in topologies with source-destination connectivity (see, \emph{e.g.,} in \cite{CHAR_CL, SHAQ_TWC}), resulting in improved diversity and throughput. More recently, to alleviate the throughput loss of HD relaying, significant focus has been given to minimizing the average packet delay. Towards this end, there have been proposed various approaches: hybrid solutions combining the benefits of MMRS and $\maxlink$ \cite{OIWA_TVT}, {the use of broadcasting in the $\SR$ link \cite{OIWA_ACCESS, NOM_TCOM2018},} the prioritization of $\RD$ transmissions \cite{NOM_TCOM2018,POUL1,TIAN_TVT} or the selection of the relay with the maximum number of packets residing in its buffer \cite{LIN_TVT}. However, it was shown that a trade-off exists between delay performance and the diversity of transmission as the number of relays with empty or full buffers increases. In \cite{POUL2} delay- and diversity-aware relay selection policies were proposed aiming at reducing the average delay by incorporating the buffer size of the relay nodes into the relay selection process.  

To improve throughput and reduce average packet delays, in a number of studies, it was proposed to employ non-orthogonal successive transmissions by the source and a selected relay. In order to recover the HD multiplexing loss, \cite{IKH3} suggests to combine MMRS with successive transmissions (SFD-MMRS). As the proposed topology aims to mimic full-duplex relaying, different relays are selected in the same time slot; however, relays are assumed to be isolated and the effect of IRI is ignored. More practical topologies were studied in \cite{NOM2, NOM_TCOM, SIM_TWC} where IRI exists and is not always possible to be cancelled. For fixed rate transmission, the proposed $\minpow$ solution proposed in \cite{NOM_TCOM}, combining power adaptation and interference cancellation, provides a performance close to the upper bound of SFD-MMRS. In addition, Kim and Bengtsson~\cite{KIM,KIM2} proposed buffer-aided relay selection and beamforming schemes taking the IRI into consideration; they consider a model which can be regarded as an example of relay-assisted device-to-device (D2D) communications, where the source and destination are low-cost devices with a single antenna and the relays comprise more powerful relays with buffers and multiple antennas. Numerical results show that their approach outperforms SFD-MMRS when interference is taken into consideration, and when the number of relays and antennas increases they approach the performance of the interference-free SFD-MMRS, herein called the \emph{ideal} SFD-MMRS. Finally, the use of buffer aided relays has been considered in topologies with full-duplex (FD) capabilities \cite{survey, NOM_FD1, ZLA_CL2016} and decoupled non-orthogonal uplink and downlink transmissions \cite{LIU_TCOM}, showing that throughput can be significantly improved due to the increased scheduling flexibility. Recently, in line with non-orthogonal successive transmissions, there has been considerable attention on non-orthogonal multiple access due to its superior spectral efficiency, and its combination with buffer-aided relaying was inevitable \cite{ZHANG_TCOM2017, NOM_GCW, CAO_TMC2018}.

In many practical considerations (\emph{e.g.}, wireless sensors), the relay nodes are hardware-limited to be HD while the source can be a more powerful wireless device with large buffers and multiple antennas. Although this observation is not always true (\emph{e.g.}, in D2D communications~\cite{KIM,KIM2}), it is a reasonable and common practical scenario. Towards this end, we study a network which consists of a buffer-aided source with a single or multiple antennas, multiple HD buffer-aided relays and a destination. With this setup, 
\begin{list4}
\item we are able to approach (achieve) the performance of the ideal SFD-MMRS by adopting a buffer-aided source whose buffer retains replicas of the relay buffers, successfully transmitted data from the source to relays, to facilitate IRI mitigation (cancellation); 
\item we relax the assumption of knowing the full CSI and instead, we allow for CSIR and limited feedback from the receiver to the transmitter; under these conditions we propose a relay pair selection scheme that is based on partial phase alignment of signals.
\end{list4}

The rest of the paper is organized as follows. In Section~\ref{sec:model}, the system model is outlined. The proposed relaying schemes for variable and fixed rate are presented in Sections~\ref{sec:policy1} and \ref{sec:policy2}, respectively. The performance of the proposed relaying policies in terms of outage and average throughput, along with comparisons with other state-of-the-art relaying schemes are presented in Section~\ref{sec:numerical}. Finally, conclusions and a discussion on future possible directions are drawn in Section~\ref{sec:conclusions}.

%
%
\subsubsection*{Notation}\label{sec:notation}

Vectors are written in bold lower case letters and matrices in bold capital letters. $\mathbb{R}$, $\mathbb{C}$ and $\mathbb{N}$ denote the sets of real numbers, complex and natural numbers, respectively. For a matrix~$\mathbf{A}\in \mathbb{C}^{n\times n}$ ($n\in \mathbb{N}$), $a_{ij}$ denotes the entry in row $i$ and column~$j$. Matrix $\mathbf{I}$ denotes the identity matrix of appropriate dimensions. The trace of a square matrix $\mathbf{A}\in\mathbb{R}^{n\times n}$ is denoted by $\trace(\mathbf{A})$. $(\cdot)^T$ and $(\cdot)^H$ denote the transpose and hermitian transpose operations, respectively; $\|\cdot \|$ denotes the $2$-norm operation. The complex conjugate of a complex number $z$ is denoted by $\overline{z}$; the real and imaginary parts of a complex number $z$ are denoted by $\mathcal{R}(z)$ and $\mathcal{I}(z)$, respectively. We denote by $\mathcal{P}$ the set of all possible relay-pairs in the relay network, and by $|\mathcal{P}|$ its cardinality. A relay pair, denoted by $(R,T)$ belongs to set $\mathcal{P}$ if and only if $R\neq T$, the receiving relay $R$ is not full and the transmitting relay $T$ is not empty. 

%
%
\section{System model}\label{sec:model}

We consider a cooperative network consisting of a buffer-aided source $S$ with multiple antennas, a set $\mathcal{K} \triangleq \{1, 2, \ldots, K \}$ of $K$ HD decode-and-forward (DF) relays with buffers, and a single destination $D$. Fig.~\ref{fig:model} illustrates a simple example with a buffer-aided two-antenna source $S$, two buffer-aided single-antenna HD DF relays, and a single-antenna destination $D$. To simplify the analysis, we examine the case where connectivity between the source and the destination is established only via the relays and ignore the direct $\SD$ link  (as in, \emph{e.g.}, \cite{IKH2,KRICHAR,IKH3,KIM,KIM2}).

The number of data elements in the buffer of relay $R_k$ is denoted by $Q_k$ and its capacity by $Q_{\max}$. For the fixed rate transmission policy, we assume that the packets are transmitted at a fixed rate of $C_0$ bits per channel use (BPCU), and the data of each transmission occupies a single time slot in the buffer. 

First, we provide the signals received at relay $R$ and destination $D$. At the destination, at any arbitrary time-slot $n$ the following signal is received:
\begin{align}
y_{D}[n]=h_{TD}x[p]+w_D[n] \; ,
\end{align}
where $x[p]$ is the signal received and stored  in a previous time-slot $p$ in the buffer of the now transmitting relay $T$, $h_{TD}$ denotes the channel coefficient from the transmitting relay $T$ to the destination $D$, and $w_{D}[n]$ denotes the additive white Gaussian noise (AWGN) at the destination in the $n$-th time slot, i.e., $w_{D}[n]\sim\mathcal{CN}(0,{\sigma_D^2})$. It must be noted that $x[p]$ was not necessarily received in the just before time-slot (i.e., $p\leq n-1$). At the same time, the reception of the source's signal by relay $R$ is interfered from the transmission of $T$ which forwards a previous signal $x[p]$ to the destination. Thus, $R$ receives
\begin{align}\label{eq:yRgeneral}
y_{R}[n]=\sum_{i\in \mathcal{A}} h_{S_i R} x_{S_i}[n] + h_{TR} x_T[n]  +w_R[n] ,
\end{align}
where $\mathcal{A}$ denotes the index set of transmit antennas at the source, i.e., $\mathcal{A}=\{1,2,\ldots,\nu\}$, $h_{S_i R}$ denotes the channel coefficient from the $i$-th transmit antenna at the source to the receiving relay $R$, $h_{TR}$ denotes the channel coefficient from the transmitting relay $T$ to the receiving relay $R$, $x_{S_i}[n]$ denotes the transmitted signal from the $i$-th transmit antenna at the source in the $n$-th time slot, $x_T[n]$ denotes the transmitted signal from the transmitting relay in the $n$-th time slot (i.e., $x_T[n] = x[p]$), and $w_{R}[n]$ denotes the AWGN at the receiving relay in the $n$-th time slot, i.e., $w_{R}[n]\sim\mathcal{CN}(0,{\sigma^2})$.

\begin{figure}[t]
\centering
\includegraphics[width=0.75\columnwidth]{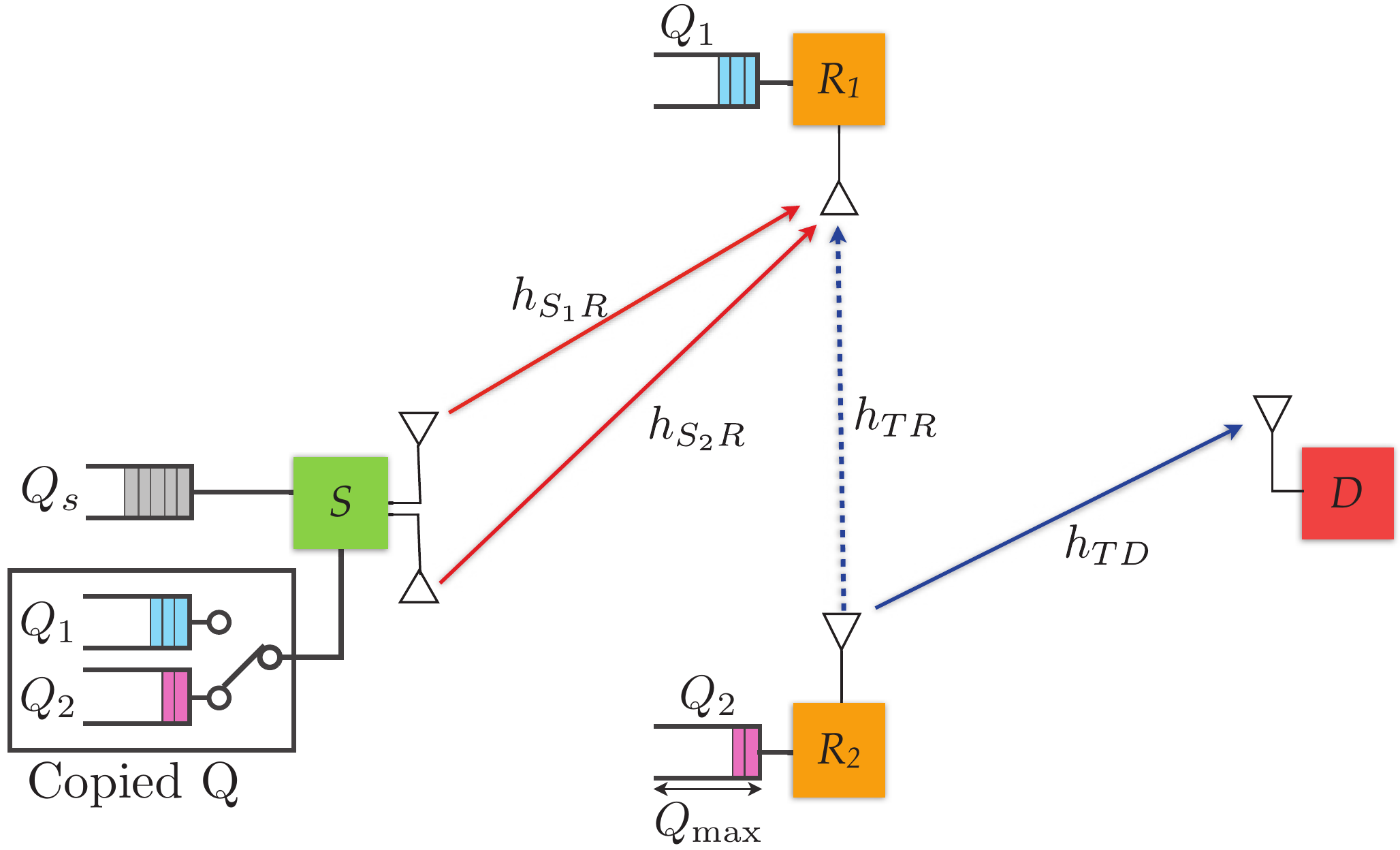}
\caption{A simple example of a cooperative network consisting of a {buffer-aided} source $S$ with two antennas ($S_1$ and $S_2$), two HD relays (the receiving relay is denoted by $R$ and the transmitting relay by $T$) and a destination $D$; in this example, $R_1 \equiv R$ and $R_2 \equiv T$. The buffers at $S$ consist basically of replicas of the data queues of the relays; the source has new packets in the source buffer $Q_{S}$ and replicas of the successfully transmitted packets to the relays in a set of copied buffers.}
\label{fig:model}
\end{figure}

The source $S$ is assumed to be saturated (infinite backlog) and hence, it has always data to transmit. The buffering memory at the source is organized into $K$ queues, that basically contain replicas of the data queues of the relays, in order to exploit it for IRI mitigation or cancellation.

The operation is assumed to be divided into time slots. In each time slot, the source and a relay simultaneously transmit their own data to mimic FD relaying (cf. \cite{IKH3,KIM,KIM2,NOM2,NOM_TCOM}). The transmission powers of the source and the $k$-th relay are denoted by $P_S$ and $P_T$, respectively. For notational simplicity, we assume throughout this paper that all devices use a common fixed transmit power level (i.e., $P_S=P_T=P,~\forall T\in\mathcal{K}$), unless otherwise specified. Moreover, we assume that the receivers send short-length error-free acknowledgment/negative-acknowledgment (ACK/NACK) messages over a separate control channel.

We assume narrowband Rayleigh block fading channels. Each channel coefficient is constant during one time slot and varies independently between time slots. For each time slot, the channel coefficient $h_{ij}$ for link $\{i\hspace{-0.09cm}\rightarrow \hspace{-0.09cm} j\}$  follows a circular symmetric complex Gaussian distribution with zero mean and variance $\sigma_{ij}^2$, i.e., $h_{ij}\sim\mathcal{CN}(0,\sigma_{ij}^2)$. Thus, the channel power gain $g_{ij} \triangleq |h_{ij}|^2$  follows an exponential distribution, i.e., $g_{ij}\sim\mathrm{Exp}({\sigma_{ij}^{-2}})$. 
In addition, we assume the AWGN at each receiver with variance $\sigma^2$.

%
%
%
%
\section{Buffer-Aided Relay Selection Based on Buffer-Aided Source Precoding}\label{sec:policy1}
The goal is to design a precoding matrix at the source such that, at each time slot $n$  the $\SR$ link for the relay selected to receive a packet from the source maximizes its SINR by transmitting a linear combination of (a) the source's desired signal for the receiving relay $R$ and (b) the signal of the transmitting relay $T$ taken out from the copied buffer at the source. Note that maximizing the SINR of the link is equivalent to minimizing the outage probability of the link. Thus, this precoding design criterion is relevant for both fixed rate and adaptive rate transmissions. 
 
\noindent \textbf{Special case: Source with a single antenna.} The received signal at the receiving relay for the $n$-th time slot, $y_R[n]$, can be expressed as
\begin{align}\label{signals_singleantenna}
y_R[n] &= h_{SR}\underbrace{\big( m_{1}x_S[n]+m_{2}x_T[n]\big)}_{\text{source signal}}+h_{TR} x_T[n] +w_R[n] \nonumber \\
&= h_{SR} \underbrace{\begin{bmatrix}m_1 & m_2\end{bmatrix}}_{\text{precoding matrix}} \underbrace{\begin{bmatrix}x_S[n] & x_T[n]\end{bmatrix}^T}_{\text{data from the source}} +h_{TR} x_T[n] +w_R[n],
\end{align}
where $x_{S_i}[n]$ in Eq.~\eqref{eq:yRgeneral} is written as $m_{1}x_S[n]+m_{2}x_T[n]$, and $m_{1},m_{2} \in \mathbb{C}$. Each transmitting node uses a fixed power $P$ and thus power control issues are not taken into account. The SINR at the receiving relay, denoted by $\Gamma_R$, is given by 
\begin{align}
\Gamma_R\triangleq \frac{|h_{SR} {m}_1|^2 P}{|h_{SR} m_2 + h_{TR}|^2 P + \sigma^2} .
\end{align}
Hence, the following optimization problem can be formulated:
\begin{subequations}
\label{OPT0_singleantenna}
\begin{align}
\displaystyle \underset{m_1,m_2}{\max}  & ~  \frac{|h_{SR} {m}_1|^2 P}{|h_{SR} m_2 + h_{TR}|^2 P + \sigma^2}\label{0obj1_sa}  \\[0.2cm]
\textrm{s.t.} 
& ~ |{m}_1|^2 + |{m}_2|^2 \leq 1. \label{0cond2_sa} 
\end{align}
\end{subequations}
The solution to optimization problem \eqref{OPT0_singleantenna} will be given for the general case and the solution to this special case can be found by simplifications \emph{mutatis mutandis}.

\noindent \textbf{General case: Source with multiple antennas.} Now, the precoding matrix at the source comprising $\nu$ antennas is $\mathbf{M}\in \mathbb{C}^{\nu \times 2}$. Matrix $\mathbf{M}$ has dimensions $\nu \times 2$, since the transmitted signals are formed as linear combinations of the source signal and the signal transmitted by the active relay. The received signal at the receiving relay for the $n$-th time slot, $y_R[n]$, can be expressed as
\begin{align}\label{signals}
y_R[n]& = \sum_{i\in \mathcal{A}}  h_{S_i R}\big( m_{i1}x_S[n]+m_{i2}x_T[n]\big)   +h_{TR} x_T[n] +w_R[n],
\end{align}
where $x_{S_i}[n]$ in Eq.~\eqref{eq:yRgeneral} is written as $m_{i1}x_S[n]+m_{i2}x_T[n]$, and $m_{i1},m_{i2} \in \mathbb{C}$. 
Defining 
\begin{align*}
\mathbf{h}_S^H &\triangleq \begin{bmatrix}h_{S_1R} & h_{S_1R} & \ldots & h_{S_{\nu}R}\end{bmatrix}, \\ 
\mathbf{m}_1 &\triangleq \begin{bmatrix}m_{11} & m_{21} & \ldots & m_{{\nu}1}\end{bmatrix}^T, \\ 
\mathbf{m}_2 &\triangleq \begin{bmatrix}m_{12} & m_{22} & \ldots & m_{{\nu}2}\end{bmatrix}^T, 
\end{align*}
Eq.~\eqref{signals} can be written as
\begin{align}\label{signals1}
y_R[n] &= \mathbf{h}_S^H \begin{bmatrix}\mathbf{m}_1 & \mathbf{m}_2\end{bmatrix} \underbrace{\begin{bmatrix}x_S[n] & x_T[n]\end{bmatrix}^T}_{\text{data from the source}}+h_{TR} x_T[n] +w_R[n].
\end{align}
\noindent From Eq.~\eqref{signals1}, it can be easily seen that the source node with $\nu$ antennas applies a linear precoding matrix $\mathbf{M}\triangleq \begin{bmatrix}\mathbf{m}_1 & \mathbf{m}_2\end{bmatrix}$ on the transmitted signals such that the IRI is reduced, while at the same time the signal from the source to the intended relay is increased. Each transmitting node uses a fixed power $P$ and thus power control issues are not taken into account. The SINR at the receiving relay, denoted by $\Gamma_R$, is given by 
\begin{align}
\Gamma_R\triangleq \frac{\mathbf{h}_S^H \mathbf{m}_1 \mathbf{m}_1^H \mathbf{h}_S P}{|\mathbf{h}_S^H \mathbf{m}_2 + h_{TR}|^2 P + \sigma^2} .
\end{align}
Hence, the following optimization problem can be formulated:
\begin{subequations}
\label{OPT0}
\begin{align}
\displaystyle \underset{\mathbf{m}_1,\mathbf{m}_2}{\max}  & ~  \frac{\mathbf{h}_S^H \mathbf{m}_1 \mathbf{m}_1^H \mathbf{h}_S P}{|\mathbf{h}_S^H \mathbf{m}_2 + h_{TR}|^2 P + \sigma^2} \label{0obj1}  \\[0.2cm]
\textrm{s.t.} 
& ~ \mathbf{m}_1^H \mathbf{m}_1 + \mathbf{m}_2^H \mathbf{m}_2 \leq 1. \label{0cond2} 
\end{align}
\end{subequations}

\begin{proposition}\label{propA}
The precoding matrix $\mathbf{M}$ that solves optimization problem~\eqref{OPT0} is given by
\begin{align}
\mathbf{M} = & \frac{1}{\| \mathbf{h}_S \|^2} \begin{bmatrix}
\sqrt{\| \mathbf{h}_S \|^2 - \omega^2 | h_{TR} |^2} \mathbf{h}_S,  &
-\omega h_{TR} \mathbf{h}_S  \end{bmatrix} \notag \\ 
= & \frac{\mathbf{h}_S}{\| \mathbf{h}_S \|^2} \begin{bmatrix}
\sqrt{\| \mathbf{h}_S \|^2 - \omega^2 | h_{TR} |^2},  & -\omega h_{TR}  \end{bmatrix} ,\label{M:sol}
\end{align}
where 
\vspace{-0.4cm}
{\small
\begin{align}\label{omega:sol}
\omega = \left( \frac{1}{2} + \frac{\| \mathbf{h}_S \|^2 + \rho}{2| h_{TR} |^2} \right) - \sqrt{\left( \frac{1}{2} + \frac{\| \mathbf{h}_S \|^2 + \rho}{2| h_{TR} |^2} \right)^2 - \frac{\| \mathbf{h}_S \|^2}{| h_{TR} |^2} }
\end{align}}
and $\rho=\sigma^2/P$.
\end{proposition}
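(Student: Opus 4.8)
The plan is to exploit the structure of the objective in \eqref{OPT0}: the numerator depends on $\mathbf{m}_1$ only through the scalar $\mathbf{h}_S^H \mathbf{m}_1$, and the interference term in the denominator depends on $\mathbf{m}_2$ only through $\mathbf{h}_S^H \mathbf{m}_2$. First I would argue that an optimal pair $(\mathbf{m}_1,\mathbf{m}_2)$ can be taken to lie in the one-dimensional subspace spanned by $\mathbf{h}_S$. Decomposing each $\mathbf{m}_i$ into components parallel and orthogonal to $\mathbf{h}_S$, the orthogonal parts leave both $\mathbf{h}_S^H \mathbf{m}_1$ and $\mathbf{h}_S^H\mathbf{m}_2$ unchanged while consuming power; discarding them frees budget that can be reassigned to $\mathbf{m}_1$, strictly increasing the numerator $|\mathbf{h}_S^H\mathbf{m}_1|^2 P$. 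Hence at optimality $\mathbf{m}_1 = a\,\mathbf{h}_S/\|\mathbf{h}_S\|^2$ and $\mathbf{m}_2 = b\,\mathbf{h}_S/\|\mathbf{h}_S\|^2$ for scalars $a,b\in\mathbb{C}$, the power constraint \eqref{0cond2} is active, and a phase argument (with $|b|$ fixed, $|\mathbf{h}_S^H\mathbf{m}_2 + h_{TR}|$ is minimized when $b$ is antiphase to $h_{TR}$) forces $b = -\omega\, h_{TR}$ with $\omega\geq 0$ real. This reproduces exactly the parametric form \eqref{M:sol}, the remaining freedom being $\omega$.

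Substituting this form gives $|\mathbf{h}_S^H\mathbf{m}_1|^2 = \|\mathbf{h}_S\|^2 - \omega^2|h_{TR}|^2$ and $|\mathbf{h}_S^H\mathbf{m}_2 + h_{TR}|^2 = (1-\omega)^2|h_{TR}|^2$, so the problem collapses to the scalar maximization of $f(\omega) = (\|\mathbf{h}_S\|^2 - \omega^2|h_{TR}|^2)\big/\big((1-\omega)^2|h_{TR}|^2 + \rho\big)$ over the feasible interval $0\le\omega\le \|\mathbf{h}_S\|/|h_{TR}|$, where the upper limit keeps the radicand in \eqref{M:sol} nonnegative. Differentiating and clearing the strictly positive denominator, the stationarity condition $f'(\omega)=0$ reduces after cancellation of a common factor $(1-\omega)$-type terms to the quadratic $|h_{TR}|^2\omega^2 - (\|\mathbf{h}_S\|^2 + |h_{TR}|^2 + \rho)\omega + \|\mathbf{h}_S\|^2 = 0$, whose two roots are precisely the $\pm$ choices in \eqref{omega:sol}.

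It then remains to select the correct root and confirm global optimality. Here I would invoke Vieta's formulas: the product of the roots equals $\|\mathbf{h}_S\|^2/|h_{TR}|^2$, which is exactly the square of the feasibility bound $\|\mathbf{h}_S\|/|h_{TR}|$, while their sum is positive; hence both roots are positive and one lies below the bound and one above, so only the smaller (minus-sign) root is feasible. A sign analysis of $f'$ closes the argument: the quadratic opens upward and is positive at $\omega=0$, so $f$ increases up to the smaller root and decreases thereafter, while $f$ vanishes at the right endpoint $\omega=\|\mathbf{h}_S\|/|h_{TR}|$; the minus-sign root is therefore the unique interior maximizer, yielding \eqref{omega:sol}. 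The single-antenna case in \eqref{OPT0_singleantenna} follows \emph{mutatis mutandis} by setting $\nu=1$, so that $\mathbf{h}_S$ reduces to the scalar $h_{SR}$.

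The main obstacle I anticipate is not the calculus but the rigorous justification of the two reductions — collapsing to the span of $\mathbf{h}_S$ and fixing the phase of $b$ — since these must be shown to preserve global optimality rather than merely assumed; once the problem is scalar, the quadratic and the root selection are routine, with the feasibility bound supplied cleanly by the product-of-roots relation.
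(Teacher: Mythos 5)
Your proof is correct, and its skeleton is the same as the paper's: restrict the precoders to the direction of $\mathbf{h}_S$, reduce \eqref{OPT0} to the scalar maximization of $f(\omega)=(a-b\omega^{2})/(b(1-\omega)^{2}+\rho)$ with $a=\|\mathbf{h}_S\|^{2}$, $b=|h_{TR}|^{2}$, arrive at the quadratic $b\omega^{2}-(a+b+\rho)\omega+a=0$, and take the smaller root. Two of your sub-arguments, however, genuinely differ from the paper's. First, the paper performs the collinearity reduction by reparametrizing $\mathbf{h}_S^{H}\mathbf{m}_1=\beta$ and $\mathbf{h}_S^{H}\mathbf{m}_2=-\omega h_{TR}$ with $\omega\in(0,1]$, fixing $\omega$, and maximizing $\beta$ via a Cauchy--Schwarz/minimum-norm argument for $\mathbf{m}_2$; your parallel-plus-orthogonal decomposition with power reallocation reaches the same conclusion directly and, in addition, explicitly justifies the antiphase choice for $\mathbf{h}_S^{H}\mathbf{m}_2$, a step the paper builds into its parametrization without comment. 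Second, for root selection the paper runs four separate contradiction arguments (infeasibility of $\omega_2$ and feasibility of $\omega_1$, each split into the cases $a\geq b$ and $a<b$ against the bound $\min\{1,\sqrt{a/b}\}$), whereas your Vieta relation $\omega_1\omega_2=a/b$ places the two roots on opposite sides of $\sqrt{a/b}$ in one line; combined with the sign pattern of the upward-opening parabola (positive at $\omega=0$), this settles global optimality more economically. Your feasible interval $[0,\sqrt{a/b}]$ omits the paper's extra constraint $\omega\leq 1$, but this is harmless: the maximizer you obtain is the same point (which the paper separately verifies satisfies $\omega_1\leq 1$), so enlarging the interval does not move the optimum, and in your formulation that bound is never forced by the problem. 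Two cosmetic points: the passage from $f'(\omega)=0$ to the quadratic is plain expansion rather than cancellation of a common $(1-\omega)$ factor, and you should fix the free phase of $\mathbf{m}_1$'s coefficient (without loss of generality, real and positive) so that your solution matches the stated form of $\mathbf{M}$ in \eqref{M:sol} exactly.
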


\begin{proof}
See appendix~\ref{A:proof_propA}.
\end{proof} 
The last expression in \eqref{M:sol} shows that the optimal $M$
is a combination of standard maximum ratio transmission across the
source antennas, and the single antenna interference suppression solution.

Proposition~\ref{propA} shows that the source does not need to cancel the interference completely in order to achieve the maximum SINR at the receiving relay. This result appears counter-intuitive, since interference is the main source of performance degradation and the SINR should be higher when the IRI is cancelled. However, this occurs due to the limited power at the source. When the power $P$ is high enough, $\rho$ gets very small. Hence,  from \eqref{omega:sol},
\begin{align*}
\lim_{\rho\rightarrow 0} \omega &= \left( \frac{1}{2} + \frac{\| \mathbf{h}_S \|^2}{2| h_{TR} |^2} \right) - \sqrt{\left( \frac{1}{2} + \frac{\| \mathbf{h}_S \|^2}{2| h_{TR} |^2} \right)^2 - \frac{\| \mathbf{h}_S \|^2}{| h_{TR} |^2} } \\
& =
\begin{cases}
1, &\text{if }\| \mathbf{h}_S \|^2 \geq | h_{TR} |^2, \\[0.2cm]
\displaystyle \frac{\| \mathbf{h}_S \|^2}{| h_{TR} |^2}, &\text{if }\| \mathbf{h}_S \|^2 < | h_{TR} |^2.
\end{cases}
\end{align*}
which means that interference is essentially cancelled if $\|
\mathbf{h}_S \|^2 \geq | h_{TR} |^2$. 
If the source can change its power, then the power levels of the source and the transmitting relay are not necessarily the same. In what follows, we allow the source to adjust its power. The SINR now becomes
\begin{align}
\displaystyle \frac{\mathbf{h}_S^H \mathbf{m}_1 \mathbf{m}_1^H \mathbf{h}_S P_S}{|\sqrt{P_S}\mathbf{h}_S^H \mathbf{m}_2 + \sqrt{P_T}h_{TR}|^2  + \sigma^2}. \label{eq:SINR1}
\end{align}
The following proposition gives the optimal 
joint selection of $\omega$ and $P_S$, that maximizes the SINR, as given in \eqref{eq:SINR1}.

\begin{proposition}\label{propB}
When $P$ and $\mathbf{M}$  are considered jointly, then the optimal power $P_S^*$ and $\omega^*$ are given by
\begin{subequations}
\begin{align}
P_S^*&=P_{\max}, \\
\omega^* &= \min\left\{1,{\frac{\|\mathbf{h}_S\|}{| h_{TR} |}}, \omega_1, \omega_2  \right\} ,
\end{align}
where
\begin{align}
\omega_1 &= \frac{P_T| h_{TR} |^2 + \sigma^2}{ \sqrt{P_T P_{\max}}| h_{TR} |^2},\\
\omega_2 &=\frac{P_{\max} \|\mathbf{h}_S\|^2 + P_T| h_{TR} |^2+\sigma^2}{2 \sqrt{P_T P_{\max}|} h_{TR} |^2 } - \nonumber \\
& \hspace{-0cm} \frac{\sqrt{(P_{\max} \|\mathbf{h}_S\|^2 + P_T| h_{TR} |^2+\sigma^2)^2 - 4P_T P_{\max}\|\mathbf{h}_S\|^2 | h_{TR} |^2 }}{2 \sqrt{P_T P_{\max}|} h_{TR} |^2 }.
\end{align}
\end{subequations}
\end{proposition}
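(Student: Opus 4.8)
The plan is to reduce the joint optimization over the precoder $\mathbf{M}$ and the source power $P_S$ to a scalar problem, solve it by elementary calculus, and then clip to the feasible set. First I would reuse the precoder parametrization from the proof of Proposition~\ref{propA}: for any power split the numerator of \eqref{eq:SINR1} is maximized by aligning $\mathbf{m}_1$ with $\mathbf{h}_S$, and the denominator is minimized by choosing $\mathbf{m}_2$ so that $\mathbf{h}_S^H\mathbf{m}_2=-\omega h_{TR}$ with $\omega\ge 0$, the constraint \eqref{0cond2} holding with equality. Writing $a=\|\mathbf{h}_S\|^2$ and $b=|h_{TR}|^2$, this turns \eqref{eq:SINR1} into the two-variable objective
\begin{align}\label{eq:planSINR}
\Gamma_R(\omega,P_S)=\frac{(a-\omega^2 b)\,P_S}{\big(\sqrt{P_T}-\sqrt{P_S}\,\omega\big)^2 b+\sigma^2},
\end{align}
to be maximized over $0\le P_S\le P_{\max}$ and $0\le\omega$, with the feasibility requirement $a-\omega^2 b\ge 0$ coming from $\mathbf{m}_1^H\mathbf{m}_1\ge 0$, i.e. $\omega\le\|\mathbf{h}_S\|/|h_{TR}|$.

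The first key step is to prove $P_S^\ast=P_{\max}$. I would substitute $t=\sqrt{P_S}\,\omega$, so that \eqref{eq:planSINR} reads $(aP_S-t^2 b)/((\sqrt{P_T}-t)^2 b+\sigma^2)$. For fixed $t$ the denominator is constant and the numerator is strictly increasing in $P_S$; moreover decreasing $P_S$ at fixed $t$ only increases $\omega=t/\sqrt{P_S}$, and $\omega$ carries only upper bounds. Hence any feasible pair $(\omega,P_S)$ is dominated by $(\omega',P_{\max})$ with $\omega'=t/\sqrt{P_{\max}}\le\omega$, which forces $P_S^\ast=P_{\max}$.

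With $P_S=P_{\max}$ fixed, \eqref{eq:planSINR} becomes a one-dimensional function of $\omega$, and I would locate its stationary points by setting the derivative to zero. After clearing the positive factor, $\mathrm{d}\Gamma_R/\mathrm{d}\omega=0$ is equivalent to the quadratic
\begin{align}\label{eq:planquad}
\sqrt{P_TP_{\max}}\,b\,\omega^2-\big(P_{\max}a+P_Tb+\sigma^2\big)\,\omega+\sqrt{P_TP_{\max}}\,a=0,
\end{align}
whose smaller root is exactly $\omega_2$. Since the product of the two roots equals $a/b$, the smaller root obeys $\omega_2\le\sqrt{a/b}=\|\mathbf{h}_S\|/|h_{TR}|$ and is therefore always feasible, while the larger root lies beyond the feasibility boundary. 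Because the quadratic in \eqref{eq:planquad} opens upward and is positive at $\omega=0$, $\Gamma_R'$ is positive on $(0,\omega_2)$ and negative just beyond it; thus $\Gamma_R$ is unimodal on the feasible interval, with interior maximizer $\omega_2$.

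Finally I would intersect this interior optimum with the remaining upper bounds to obtain the stated $\min$. The realness constraint supplies $\|\mathbf{h}_S\|/|h_{TR}|$; the bound $\omega\le 1$ caps the cancellation weight; and $\omega_1$ is the threshold at which $\partial\Gamma_R/\partial P_S$ changes sign at $P_S=P_{\max}$ (equivalently, the value beyond which keeping $P_S=P_{\max}$ is no longer consistent with first-order optimality), obtained from $\sqrt{P_TP_{\max}}\,b\,\omega=P_Tb+\sigma^2$. Since $\Gamma_R$ increases up to $\omega_2$, clipping the interior optimum to the feasible set gives $\omega^\ast=\min\{1,\|\mathbf{h}_S\|/|h_{TR}|,\omega_1,\omega_2\}$. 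The hard part will be precisely this boundary bookkeeping: establishing rigorously that the two-variable optimum lies on the edge $P_S=P_{\max}$ and then verifying the ordering of the four candidate values so that the $\min$ is correct in every parameter regime. In the regime $P_T\le P_{\max}$ of interest one can in fact check that $\omega_2\le\omega_1$ and $\omega_2\le\sqrt{a/b}$ always hold, so only the cap $\omega\le 1$ is ever active, but a complete proof should handle the remaining cases as well.
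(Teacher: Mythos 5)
Your proposal is correct and reaches the stated result by a genuinely different route from the paper's own proof. The paper (Appendix B) works with the same scalar objective $h(\omega,x)=\frac{(a-b\omega^2)x^2}{b(c-\omega x)^2+\sigma^2}$, $a=\|\mathbf{h}_S\|^2$, $b=|h_{TR}|^2$, $c=\sqrt{P_T}$, $x=\sqrt{P_S}$, but it establishes $P_S^*=P_{\max}$ differently: it computes the stationary point of $h$ in $x$, namely \eqref{omega_x1}, substitutes it back to obtain the ridge function $h(\omega)=\frac{(a-b\omega^2)(bc^2+\sigma^2)}{b\sigma^2\omega^2}$, observes that this is decreasing in $\omega$, and then runs a two-case comparison between $\omega_a$ (your $\omega_1$, where $\partial h/\partial x$ vanishes at $x=M=\sqrt{P_{\max}}$) and $\omega_b$ (your $\omega_2$, from \eqref{omega_star1} at $x=M$) to conclude the optimum sits at $x=M$ with $\omega^*$ the smaller of the two candidates, clipped to $\min\{1,\sqrt{a/b}\}$. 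Your domination argument --- fixing $t=\sqrt{P_S}\,\omega$ so the numerator becomes $aP_S-bt^2$, increasing in $P_S$, while the denominator is unchanged and $\omega=t/\sqrt{P_S}$ only shrinks (hence stays feasible) --- is cleaner and more rigorous than the paper's ridge substitution, and your unimodality analysis (upward-opening quadratic, positive at zero, product of roots equal to $a/b$ so the smaller root automatically respects the $\sqrt{a/b}$ cap) makes precise what the paper leaves implicit. A bonus your route exposes: the candidate $\omega_1$ is redundant \emph{unconditionally}, not only when $P_T\le P_{\max}$ as you hedge; evaluating the stationary quadratic $q(\omega)=bcM\omega^2-(aM^2+bc^2+\sigma^2)\omega+acM$ at $\omega_1=\frac{bc^2+\sigma^2}{bcM}$ gives $q(\omega_1)=-\frac{aM\sigma^2}{bc}<0$, so $\omega_2<\omega_1$ always and the paper's case (a) is vacuous. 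This one-line check closes the ``boundary bookkeeping'' you flagged: the four-term $\min$ in the statement is correct but never attained at $\omega_1$, so your $\min\left\{1,\sqrt{a/b},\omega_2\right\}$ coincides with it in every parameter regime.
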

\begin{proof}
See appendix~\ref{B:proof_propB}.
\end{proof} 

Proposition~\ref{propB} suggests 
that in the case which $P_S$ and $\omega$ are jointly optimized, in order to maximize the SINR, the source transmits with $P_{\max}$ and chooses the $\omega$ corresponding to $P_{\max}$.

%
%
\subsection{Adaptive Rate Relay Pair Selection Policy}\label{sec:AR_RS_policy}

Since CSIs of $\RD$ links are assumed to be available at the source and relays, adaptive rate transmission, which implies transmission rate is determined by instantaneous CSI at each link, is possible. In this case, the source encodes its own desired codewords based on CSI of the selected $\SR$ link and codewords corresponding to the transmitting relay based on CSI of the selected $\RD$ link. Then, the source combines them via precoding. In the adaptive rate transmission, the main objective is to maximize the average end-to-end rate from the source to the destination, which is equivalently expressed as the average received data rate at the destination \cite{KIM2}, i.e.,
\begin{align}
\bar{C} = \underset{{W\to\infty}}{\lim} \frac{1}{W} \sum_{t=1}^{W}C_{T_t^* D}(t),
\label{eq:bar_C}
\end{align}
such that
\begin{align*}
\sum_{t=1}^{W}C_{T_t^* D}(t) \leq \sum_{t=1}^{W}C_{SR_t^*}(t), 
\end{align*}
where $W$ denotes a time window length and $C_{SR_t^*}(t)$ and $C_{T_t^*D}(t)$ denote the resulting instantaneous link rates depending on time slot $t$ once the relay pair $(R_t^*,T_t^*)$ is selected. The instantaneous rates are obtained by
\begin{align}
C_{S R}(t) &= \min\left\{ \log_2(1+\gamma_{S R}(t)), Q_{\max} - Q_{R}(t-1) \right\},\\
C_{T D}(t) &= \min\left\{ \log_2(1+\gamma_{T D}(t)), Q_{T}(t-1) \right\},
\end{align}
where $\gamma_{S R}(t)$ and $\gamma_{T D}(t)$ denote the effective SINR/SNR (signal-to-noise-ratio) of $\SR$ and $\TD$ links at time slot $t$ after applying the precoding matrix obtained by Proposition~\ref{propA}, i.e.,
\begin{align}
\gamma_{S R}(t) &= \frac{\left(\| \mathbf{h}_S \|^2 - \omega^2 | h_{TR} |^2\right)P}{(1-\omega^2)| h_{TR} |^2 P+\sigma^2} , \\[0.12cm]
\gamma_{T D}(t) &=  \frac{|h_{TD}|^2 P}{\sigma^2} ,
\end{align}
where $\omega$ is given in \eqref{omega:sol},
$Q_{\max}$ denotes the maximum length of queue, and $Q_{R}(t-1)$ and $Q_{T}(t-1)$ are the queue lengths of receiving and transmitting relays in BPCU at time slot $(t-1)$. For the selected relay pair $(R^*, T^*)$, the queue lengths are updated at the end of each time slot as
\begin{align}
Q_{R^*}(t) = Q_{R^*}(t-1) + C_{S R^*}(t),\\
Q_{T^*}(t) = Q_{T^*}(t-1) - C_{S T^*}(t).
\end{align}

Hereafter, based on the designed precoding matrix $\mathbf{M}$ in Section~\ref{sec:policy1}, we propose an adaptive rate relay selection policy for maximizing the average end-to-end rate. Applying a Lagrangian relaxation, the objective function to maximize the average end-to-end rate in \eqref{eq:bar_C} for the adaptive rate transmission becomes equivalent to a weighted sum of instantaneous link rates \cite{KIM,KIM2}.
The resulting relay selection scheme has the form
\begin{subequations}
\label{eq:RS_AR}
\begin{align}
\displaystyle \underset{(R, T)}{\max} &~ \delta_R C_{S R}(t) + (1-\delta_T) C_{T D}(t),\\
\textrm{s.t.} &~ R \neq T,~R,T \in \mathcal{K},
\end{align}
\end{subequations}
for predetermined weight factors\footnote{Under i.i.d. channel conditions, the weight factors are approximately identical for all relays and can be reduced as a single weight factor.} $\delta_k\in[0,1],~k\in\mathcal{K}$,
where $C_{S R}(t)$ and $C_{T D}(t)$ denote the instantaneous rates of $\SR$ and $\TD$ links at time slot $t$, respectively.
It is worth noting that for finite buffer size, the optimal weight factors can be found using either a subgradient method or a back-pressure algorithm \cite{KIM2}. According to the back-pressure algorithm, the weight factors can be found as $\delta_k = 1 - Q_k(t)/Q_S(t)$ during a training period where $Q_k(t)$ and $Q_S(t)$ denote the buffer occupancies at time $t$ at the $k$-th relay and the source, respectively. The details can be found in \cite{KIM2}.

\begin{remark}
The optimal relay pair selection in \eqref{eq:RS_AR} requires an exhaustive search with $K\times(K-1)$ combinations. Therefore, the computational complexity of optimal relay pair selection is $\mathcal{O}(K^2)$.
\end{remark}

\begin{remark}
The implementation of the proposed scheme requires global CSI of the instantaneous channels and the buffer states. The central unit (\emph{e.g.}, the source in this case) uses this information in order to select the appropriate relay pair. This global CSI requires a continuous feedback for each wireless link as we use a continuous monitoring of the ACK/NACK signaling in order to identify the status of the buffers. Although these implementation issues are beyond the scope of this paper, the proposed scheme can be implementable with the aid of various centralized and distributed CSI acquisition techniques and relay selection approaches as in \cite{BLE} where each relay sets a timer in accordance to the channel quality and through a countdown process the selection of the best relay is performed. Additionally, CSI overhead can be reduced significantly, through distributed-switch-and-stay-combining as in \cite{DSSC1, NOM_CAMAD, DSSC2}.
\end{remark}

%
%
%
%
\section{Buffer-Aided Relay Selection Based on Buffer-Aided Phase Alignment}\label{sec:policy2}

In this section, we relax the assumption of having knowledge of the full CSI. Instead, we allow for CSIR knowledge with limited feedback, i.e., each receiving node has CSI of the channel it receives data from and it can provide some information to the transmitting node. More specifically, each (receiving) relay feeds back to the source a phase value via a reliable communication link. 
This approach is closely related to phase feedback schemes proposed and standardized for MISO transmission (see, for example, \cite{P1, P2, P3, P4}).
The source signal can use this phase value in one of two possible ways: 
\begin{itemize}
\item[(a)] to mitigate the interfering signal, so that the overall interference is reduced or even eliminated; 
\item[(b)] to amplify the interfering signal, so that it can be decoded and removed from the rest of the received signals.
\end{itemize}
The phase value can be quantized into the desired number of bits, using uniform quantization.
We aim at recovering the multiplexing loss of the network by having the source and a relay to align phases, such that the IRI is reduced. As a result, the proposed phase alignment approach reduces the communication overhead, while introducing a smart quantization that further reduces the complexity of the transmission. 

Since there is no full CSI knowledge, as assumed in the scheme proposed in the previous section, no precoding can be applied and a relay-selection policy is proposed based on CSIR knowledge with limited feedback that aims at taking advantage of the multi-antenna source. Selection of the relay-pair is performed by choosing the pair that achieves the maximum end-to-end SINR. 

\subsection{Buffer-Aided Phase Alignment (BA-PA)}

Since the source has no CSI for any of the links between the antennas at the source and the relays, equal power allocation across all antennas is a natural approach at the transmitter side. However, for overhead reduction on CSI estimation at the receiver, we consider to use just two of them:
one for transmitting the packet to a relay and the other to mitigate the IRI. The existence of more than two antennas, however, can increase the diversity gain by choosing a subset of antennas based on CSI, while the fact that not all available antennas are included, the overhead for channel information is limited. The same principle can equally well be applied using a single antenna
source node, where one and the same antenna is used both for the
desired packet and the IRI mitigation.
Assuming two antennas used, the receiving signal at  $R$ in ~\eqref{eq:yRgeneral} is given by 
\begin{align}\label{eq:yR2}
y_{R}[n]&= h_{S_1 R} m_1 x_{S_1}[n] + h_{S_2 R} m_2 x_{S_2}[n] + h_{TR} x[p]  +w_R[n] ,
\end{align}
where we set $m_1=1/\sqrt{2}$ and $m_2=e^{j\phi}/\sqrt{2}$.
In each time slot, the signal from the second antenna of the source $x_{S_2} [n]$ is used in one of the following two ways:
\begin{itemize}
\item[{(a)}] to minimize the interference caused by the transmitting relay; this is done by transmitting $x[p]$ (i.e., $x_{S_2} [n]=x[p]$) with a shifted phase such that the interfering signal and the signal from the second antenna is in anti-phase. The optimal phase for this is given by 
\begin{align}\label{opt:1_1}
\phi^\star=\arg \min_{\phi} \norm{\frac{h_{S_2 R}}{\sqrt{2}}e^{j\phi} +h_{TR}}^2 ;
\end{align}
\item[{(b)}] to maximize the interference caused by the transmitting relay in order to make the signal strong enough to be decoded first, and hence, eliminate it. The optimal phase for this is given by 
\begin{align}\label{opt:1_2}
\phi^\dagger=\arg \max_{\phi} \norm{\frac{h_{S_2 R}}{\sqrt{2}}e^{j\phi} +h_{TR}}^2 .
\end{align}
\end{itemize}

\begin{proposition}\label{propC}
The phase $\phi^\star$ such that the interfering signal at the receiving relay $R_r$ from the transmitted signal $x[p]$ of the transmitting relay $R_t$ is minimized is given by
\begin{align}
e^{j\phi^\star} = -\frac{h_{S_2R}^H h_{TR}}{|h_{S_2 R}^H h_{TR}|} .
\end{align}
Similarly, the phase $\phi^\dagger$ such that the interfering signal at the receiving relay $R_r$ from the transmitted signal $x[p]$ of the transmitting relay $R_t$ is maximized is given by
\begin{align}
e^{j\phi^\dagger} = \frac{h_{S_2R}^H h_{TR}}{|h_{S_2R}^H h_{TR}|} .
\end{align}
\end{proposition}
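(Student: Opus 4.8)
The plan is to collapse this into a single-variable trigonometric optimization by expanding the squared modulus and isolating the only $\phi$-dependent term. First I would apply the identity $|a+b|^2 = |a|^2 + |b|^2 + 2\mathcal{R}(a\overline{b})$ with $a = \tfrac{h_{S_2R}}{\sqrt{2}}e^{j\phi}$ and $b = h_{TR}$, which gives
\begin{align*}
\norm{\frac{h_{S_2 R}}{\sqrt{2}}e^{j\phi} +h_{TR}}^2 = \frac{|h_{S_2R}|^2}{2} + |h_{TR}|^2 + \sqrt{2}\,\mathcal{R}\!\left(h_{S_2R}\overline{h_{TR}}\,e^{j\phi}\right).
\end{align*}
The first two summands do not depend on $\phi$, so both the minimization \eqref{opt:1_1} and the maximization \eqref{opt:1_2} reduce to optimizing the single cross term over $\phi$.

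Next I would write the fixed complex constant in polar form, $h_{S_2R}\overline{h_{TR}} = |h_{S_2R}\overline{h_{TR}}|\,e^{j\theta}$ with $\theta = \arg(h_{S_2R}\overline{h_{TR}})$, so that the $\phi$-dependent term becomes $\sqrt{2}\,|h_{S_2R}||h_{TR}|\cos(\phi+\theta)$. Because $\cos$ attains its minimum $-1$ at $\phi+\theta = \pi$ and its maximum $+1$ at $\phi+\theta = 0$, the minimizer and maximizer are $\phi^\star = \pi - \theta$ and $\phi^\dagger = -\theta$, respectively.

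The final step is to translate these angles into the claimed closed form. Since the relay is single-antenna, $h_{S_2R}$ and $h_{TR}$ are scalars and the Hermitian coincides with the conjugate, so $e^{-j\theta} = \overline{h_{S_2R}}\,h_{TR}/|h_{S_2R}\overline{h_{TR}}| = h_{S_2R}^H h_{TR}/|h_{S_2R}^H h_{TR}|$, using $|h_{S_2R}\overline{h_{TR}}| = |h_{S_2R}^H h_{TR}|$. Hence $e^{j\phi^\dagger} = e^{-j\theta}$ recovers the stated maximizing phase, and $e^{j\phi^\star} = e^{j(\pi-\theta)} = -e^{-j\theta}$ recovers the minimizing phase with the opposite sign, exactly matching the proposition.

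As this is merely a one-variable optimization of a cosine, there is no genuine analytic obstacle; the only care required is in the bookkeeping of complex conjugates, so that the result is expressed using $h_{S_2R}^H$ rather than $h_{S_2R}$, and in verifying that the magnitude normalization $|h_{S_2R}\overline{h_{TR}}| = |h_{S_2R}^H h_{TR}|$ makes the two stated unit-modulus expressions consistent.
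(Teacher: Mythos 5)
Your proof is correct, and it takes a different route from the paper's. The paper argues via the triangle inequality: it notes $\norm{\tfrac{h_{S_2R}}{\sqrt{2}}e^{j\phi}+h_{TR}} \geq \norm{\,\norm{\tfrac{h_{S_2R}}{\sqrt{2}}}-\norm{h_{TR}}\,}$ (respectively $\leq \norm{\tfrac{h_{S_2R}}{\sqrt{2}}}+\norm{h_{TR}}$), asserts that equality holds exactly when $h_{S_2R}e^{j\phi}$ is in anti-phase (respectively in phase) with $h_{TR}$, and reads off the optimal unit-modulus factor. You instead expand the squared modulus, isolate the single $\phi$-dependent cross term $\sqrt{2}\,\mathcal{R}\bigl(h_{S_2R}\overline{h_{TR}}\,e^{j\phi}\bigr)=\sqrt{2}\,|h_{S_2R}||h_{TR}|\cos(\phi+\theta)$, and optimize the cosine directly. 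The two arguments encode the same geometric fact, but yours is fully self-contained: the equality-condition step that the paper states informally ("this occurs when $h_{S_2R}$ is in phase with $-h_{TR}$") is exactly what your polar-form computation proves, and your derivation also exhibits the extremizing angles $\phi^\star=\pi-\theta$, $\phi^\dagger=-\theta$ explicitly before converting them to the stated closed forms. The paper's version is shorter and emphasizes that both extrema are attained (the constraint set is compact and the bounds are achieved), which your approach gets for free since $\cos$ attains $\pm 1$. Your conjugate bookkeeping, including $h_{S_2R}^H=\overline{h_{S_2R}}$ for scalars and $|h_{S_2R}\overline{h_{TR}}|=|h_{S_2R}^H h_{TR}|$, is accurate, so the final expressions match the proposition exactly.
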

\begin{proof}
See appendix~\ref{C:proof_propC}.
\end{proof}

Proposition~\ref{propC} gives the expressions for phase alignment for
each of the two approaches considered. By appropriately choosing the
phase shift $\phi$ of the signal from one of the source's antennas,
the source can minimize or maximize the interfering signal in order to
mitigate it or eliminate it completely. Note that the optimal value of
$\phi$ (either $\phi^\star$ or $\phi^\dagger$) can be quantized into
the desired number of bits, using uniform quantization, and be fed
back to the source. 

%
%
\subsection{Fixed Rate Relay Pair Selection Policy}\label{sec:FR_RS_policy}

Since only CSIR is available, the transmitters use fixed rate transmission. In such a case, the main objective is to minimize the outage probability. Independently of nodal distribution and traffic pattern, a transmission from a transmitter to its corresponding receiver is successful (error-free) if the SINR of the receiver is above or equal to a certain threshold, called the \textit{capture ratio}\footnote{It depends on the modulation and coding schemes as error-correction coding techniques supported by a wireless communication system and corresponds to the required SINR to guarantee the data rate of the application.} $\gamma_0$. Therefore, at the receiving relay $R$ for a successful reception when at the same time relay $T$ is transmitting we require that
\begin{align}\label{SINR_geq1}
\Gamma_{R}^{T}\triangleq \frac{|h_{S_1 R}|^2 P/{2}}{\left|\frac{h_{S_2 R}}{\sqrt{2}}e^{j\phi}+h_{TR}\right|^2 P + \sigma^2} \geq \gamma_0,
\end{align}
and at the destination we require that
\begin{align}\label{SNR_geq1}
\Gamma_{D}^T\triangleq \frac{|h_{TD}|^2 P}{\sigma^2} \geq \gamma_0.
\end{align}

An outage event occurs at the relay $R$ and destination $D$ when $\Gamma_{R}^S < \gamma_0$ and $\Gamma_{D}^T < \gamma_0$, respectively. The outage probability is denoted by $\mathbb{P}(\Gamma_{i}^k < \gamma_0)$, where $i$ represents the receiving node and $k$ the transmitting node.
Each relay $i$ is able to estimate the SINR for each transmitting relay $k$, denoted by $\Gamma_{i}^{k}$, $k\neq i$ (the full pilot protocol needed to the channel estimation is out of the scope of this work). We assume that this information can be communicated to the destination. In addition, the destination node can compute its own SNR due to each of the transmitting relays, denoted by $\Gamma_{D}^{k}$, $k\in\{1,\ldots, K\}$. Finally, we assume that the destination node has buffer state information\footnote{The destination can know the status of the relay buffers by monitoring the ACK/NACK signaling and the identity of the transmitting/receiving relay.} and selects the relays for transmission and reception, based on some performance criterion, \emph{e.g.}, with the maximum end-to-end SINR (as it is defined in \cite{IKH3}), through an error-free feedback channel. Note that by having the destination to take the decision, no global CSI is required at any node.

As we have seen in Proposition~\ref{propC}, the source can minimize the interfering signal or maximize it in order to eliminate it by appropriately choosing the phase shift $\phi$ of the signal from one of the source's antennas. It can be easily deduced that at low IRI it is beneficial to try to remove the interfering signal, whereas at high IRI it is beneficial to amplify the interfering signal and thus eliminate it completely by decoding it first. The receiving relay is able to compute which option gives the highest SINR in each case, since it has knowledge of the channel states and hence, it can decide which phase to feed back to the source at each time slot.

The procedure of the proposed algorithm is as follows: By examining \emph{one-by-one} the possible relay pairs, first we calculate the power of the signal received at {\it D} which is $P_{D}=|h_{TD}|^2 P+\sigma^2$ for an arbitrary relay $T$ with non-empty buffer. The receiving relay $R$ must be different than the transmitting relay and its buffer should not be full. For each candidate relay $i$ for reception, a \emph{feasibility check} for interference cancellation (IC) is performed, i.e., 
$$
\Gamma_{i}^{k} = \frac{\left| h_{R_k R_i}\right|^2 P}{|h_{S_1 R_i}+h_{S_2 R_i}e^{j\phi}|^2 P/{2} + \sigma^2}\geq \gamma_0 . 
$$
If IC is feasible, the candidate relay is examined whether SNR at the receiving relay after IC is above the \textit{capture ratio} $\gamma_0$ or not, i.e., once interference is removed \eqref{SINR_geq1} becomes
$$
\Gamma_{R}^{S}\triangleq \frac{|h_{S_1 R}|^2 P/{2}}{\sigma^2} \geq \gamma_0.
$$
If IC is infeasible, interference mitigation (IM) is considered. Hence, it is examined whether SINR at the receiving relay after IM is above the \textit{capture ratio} or not. If the relay denoted by $R$ can provide an SNR/SINR above the capture ratio after IC/IM, it is considered as a candidate receiving relay. 

For the selected relay pair $(R^*, T^*)$, the queue lengths are updated at the end of time slot as
\begin{subequations}
\begin{align}
&Q_{R^*}(t) = \min \{Q_{R^*}(t-1) + C_0, Q_{\max}\}, \label{Q_fixed_1} \\
&Q_{T^*}(t) = \max\{Q_{T^*}(t-1) - C_0, 0\}, \label{Q_fixed_2} 
\end{align}
\end{subequations}
Note that for fixed rate transmission, the queue length is equivalently modeled as the number of packets in the queue. 

\begin{remark}
The relay pair selection policy, as before, requires an exhaustive search with $K\times(K-1)$ combinations imposing a complexity of the order $\mathcal{O}(K^2)$. Note, however, that links with SINR/SNR above the \textit{capture ratio} $\gamma_0$ are suitable for transmission without compromising the performance of the proposed scheme, i.e., it is not necessary to choose the relay pair that provides the maximum end-to-end SINR, as long as the outage event is avoided. Hence, its simplicity allows the selection of a relay pair much faster, provided that the channel conditions are good. This characteristic will allow for more advanced algorithms in the future where the queue length will be a decisive factor on the decision of the relay pair such that certain delay constraints are met. For the time-being a relay pair is selected such that there is no reduction in performance (either due to outage or full/empty buffers).
\end{remark}

%
%
\section{Numerical Results}\label{sec:numerical}

We have developed a simulation setup based on the system model description in Section~\ref{sec:model}, in order to evaluate the performance of the proposed relay pair selection schemes with current state-of-the-art for both adaptive and fixed rate transmission cases according to CSI availability at the source. 
In the simulations, we assume that the clustered relay configuration ensures $\iid$ Rayleigh block fading with average channel qualities $\sigma_{SR}^2$, $\sigma_{RR}^2$, and $\sigma_{RD}^2$ for all the $\SR$, $\RR$, and $\RD$ links, respectively. This assumption simplifies the interpretation of the results and it is used in several studies in the literature  \cite{BEL,KRI2,BEN, BEN2}. 
The $\iid$ case and the related analysis can be considered as a useful guideline for more complex asymmetric configurations, in which power control at the source and relay nodes may be used to achieve symmetric channel configuration for better outage performance \cite{ML-AF}.

\subsection{Adaptive Rate Transmission}

For adaptive rate transmission, we evaluate the performance of the proposed BA-SP relay selection (BA-SPRS) scheme in terms of the average end-to-end achievable rate in BPCU.
The following upper bound and state-of-the-art relay selection schemes are considered for the performance comparison.
\begin{itemize}
\item Upper bound: the optimal solution of \eqref{eq:RS_AR} under no IRI
\item HD best relay selection (HD-BRS)~\cite{BLE,KAR}
\item HD hybrid relay selection (HD-HRS)~\cite{IKH2}
\item HD $\maxlink$ relay selection (HD-MLRS)~\cite{KRICHAR}
\item ideal (IRI interference is assumed negligible) and non-ideal (IRI interference is taken into consideration) SFD-MMRS~\cite{IKH3} 
\end{itemize}
Since in this work we consider multiple antennas only at the source while all conventional schemes above consider a single antenna at the source, we employ maximum ratio transmission (MRT) at the source, which maximizes the effective channel gain at each receiving relay, for all conventional schemes as well for making the comparison fair.

Fig.~\ref{fig:art_rate_snr} shows the average end-to-end rate for varying SNR with two relays and two antennas at the source when $Q_{\mathrm{max}}\to\infty$, $\sigma_{SR}^2=\sigma_{RD}^2=0$ dB with various IRI intensities: the same intensity ($\sigma_{RD}^2=0$ dB), stronger IRI ($\sigma_{RD}^2=3$ dB) and weaker IRI ($\sigma_{RD}^2=-3$ dB).
While the ideal SFD-MMRS scheme almost achieves the upper bound, non-ideal SFD-MMRS significantly degrades the performance since the effect of IRI is severe so that it cannot be neglected. The discrepancy between the proposed BA-SPRS scheme and the upper bound increases with SNR, since it results in increasing IRI. Especially, the gap becomes larger for stronger IRI condition ($\sigma_{RD}^2=3$ dB). This is because perfect IC is impossible as IRI increases. However, when IRI becomes at least $3$ dB below the $\SR$ and $\RD$ channel qualities, the proposed scheme almost approaches the upper bound.
The conventional HD schemes achieve almost half the average end-to-end rate of the ideal SFD-MMRS and the proposed BA-SPRS schemes.
Although the HD-MLRS scheme is a more advanced scheme than HD-BRS and HD-HRS schemes in the sense of diversity order, it is the worst for this case with infinite buffer length, since our system setup yields a channel imbalance between the $\SR$ and $\RD$ links due to multiple antennas at only the source, which gives a higher chance to select the $\SR$ link than the $\RD$ link for the HD-MLRS scheme. In other words, for the HD-MLRS scheme, the larger the buffer size, the more source data is buffered at relay buffers, while the throughput is determined by the amount of received data at the destination.
From the results, it is shown that the proposed BA-SPRS scheme always outperforms the non-ideal SFD-MMRS and HD relay selection schemes regardless of SNR and IRI intensity.
\begin{figure}[t]
\centering
\includegraphics[width=0.67\columnwidth]{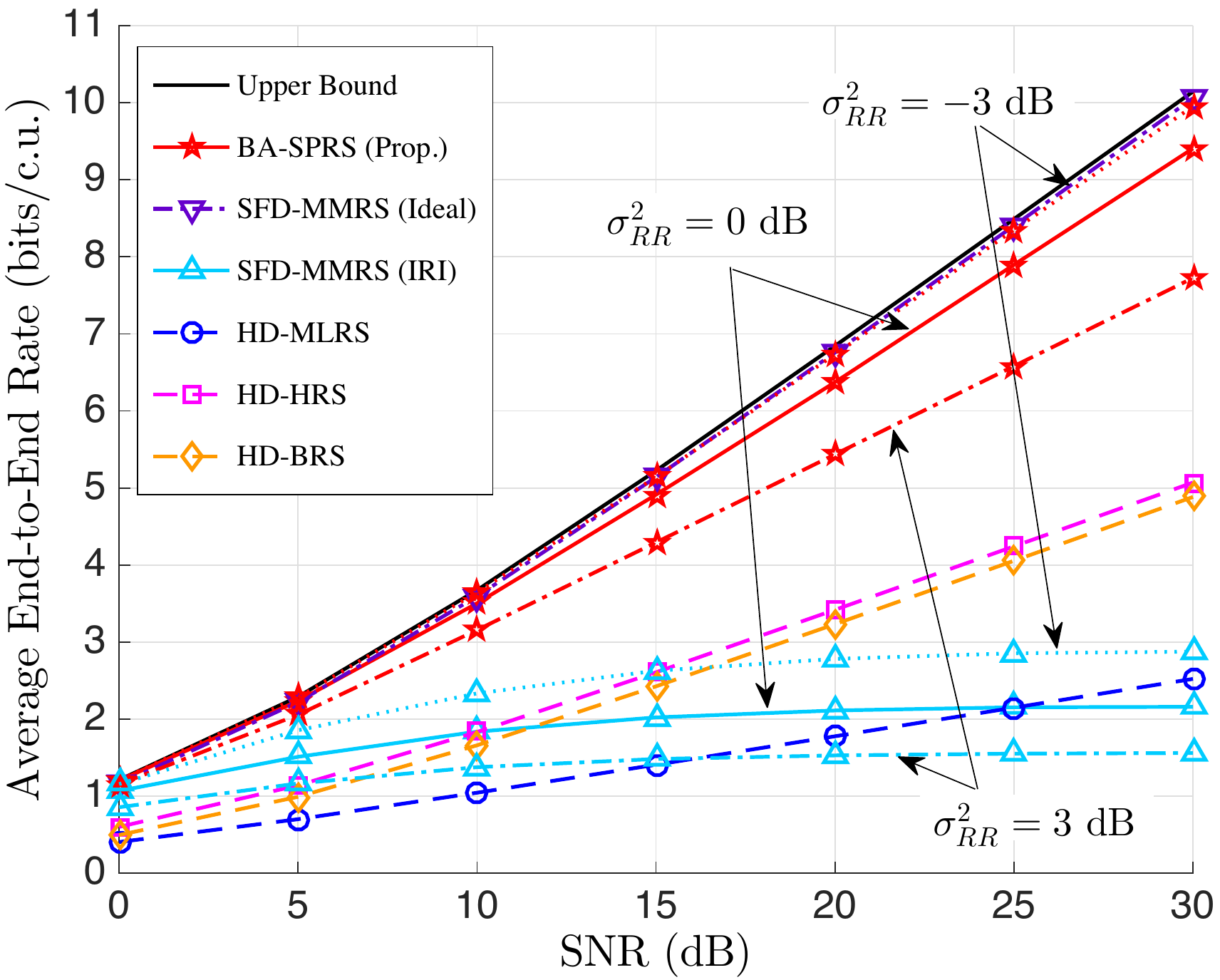}
\caption{Average end-to-end rate for varying SNR when $K=2$, $Q_{\mathrm{max}}\to\infty$, $\sigma_{SR}^2=\sigma_{RD}^2=0$ dB, and $\sigma_{RR}^2=-3, 0$, or $3$ dB.}
\label{fig:art_rate_snr}
\end{figure}

Fig.~\ref{fig:art_rate_K} shows the effect of the number of relays. As the number of relays increases, the proposed BA-SPRS scheme quickly approaches the upper bound.
Especially, when $K=3$ and $\sigma_{RR}^2 = 0$ dB, it already almost achieves the upper bound while other schemes can never achieve the upper bound except for the ideal SFD-MMRS scheme neglecting IRI. Therefore, in the case when $K=2$, an additional relay (i.e., making $K=3$) contributes significantly to the performance enhancement  with the same IRI intensity.
For stronger IRI ($\sigma_{RR}^2=3$ dB), although there exist some gaps from the upper bound for a small number of relays, it is fast recovered with increasing number of relays even under strong IRI condition and finally converges to the upper bound. Therefore, increasing the number of relays, offering spatial diversity and ignoring hardware and deployment costs, can be a simple solution against strong IRI situations.
For the same reason stated in Fig.~\ref{fig:art_rate_snr}, the conventional HD-MLRS scheme rather degrades the average end-to-end rate with respect to increasing number of relays, since increasing the number of relays causes a similar effect to that of increasing the buffer size.
\begin{figure}[t]
\centering
\includegraphics[width=0.67\columnwidth]{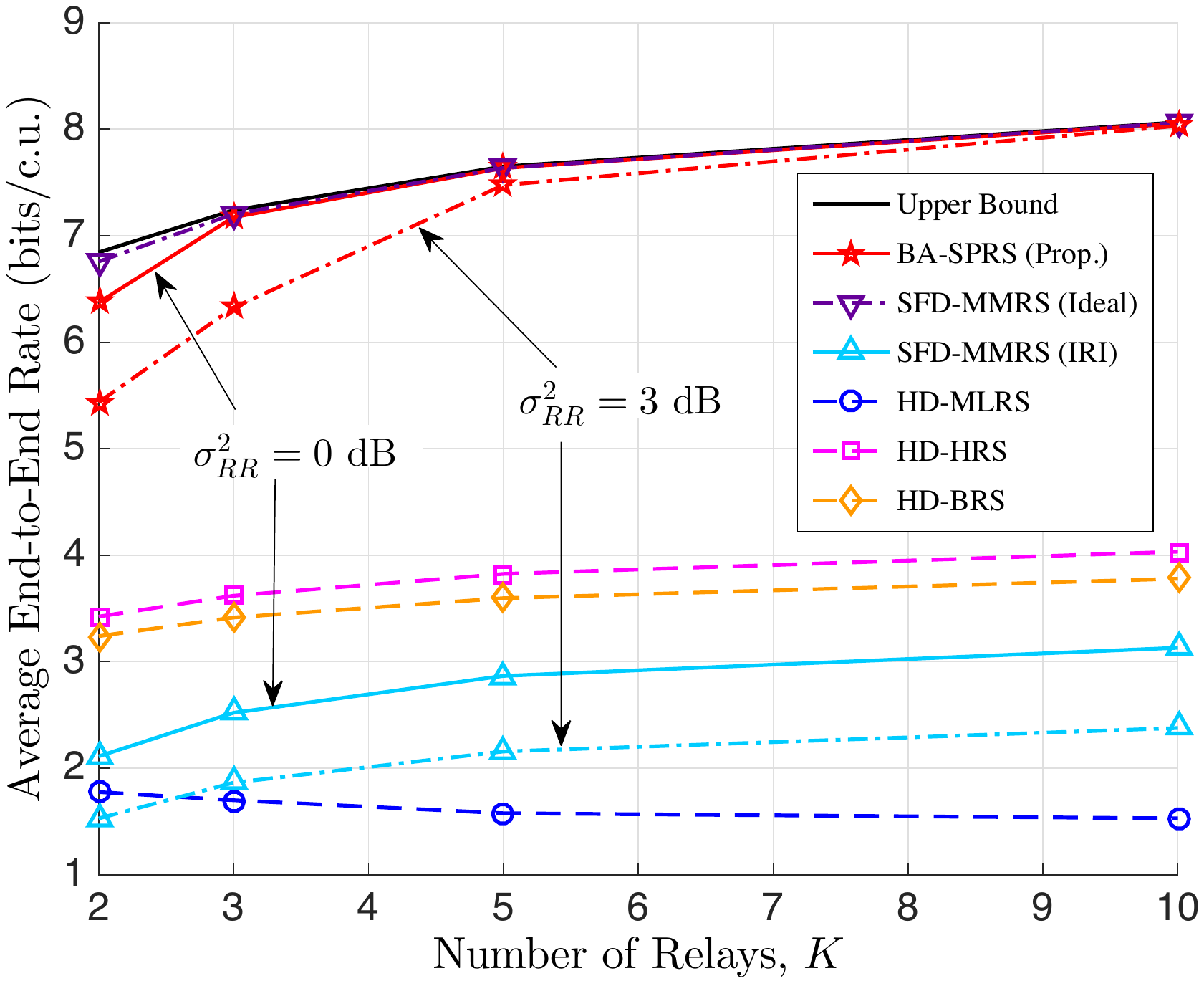}
\caption{Average end-to-end rate with increasing the number of relays when $\mathrm{SNR}=20$ dB, $\sigma_{SR}^2=\sigma_{RD}^2=0$ dB, and $\sigma_{RR}^2 = 0$ or $3$ dB.}
\label{fig:art_rate_K}
\end{figure}

Fig.~\ref{fig:art_rate_Qmax} shows the average end-to-end rate for different maximum buffer sizes for $K=3$ and $\mathrm{SNR}=20$ dB, which limits the instantaneous rate for each link. The ideal SFD-MMRS and the proposed BA-SPRS schemes converge to the upper bound as the maximum buffer size increases but they have a cross point at $Q_{\max}\approx25$. The proposed BA-SPRS scheme achieves rather slightly higher average rate when $25<Q_{\max}<5000$ than the ideal SFD-MMRS scheme neglecting IRI. As a result, the proposed BA-SPRS scheme is still effective for the finite buffer size which is a more practical setup. All the schemes converge to their own maximum rate when approximately $Q_{\max}>50$ except for the HD-MLRS scheme. As shown in Figs.~\ref{fig:art_rate_snr} and \ref{fig:art_rate_K}, the HD-MLRS scheme rather degrades the average end-to-end rate as the buffer size increases due to the effect of channel imbalance between $\SR$ and $\RD$ links. The HD-BRS scheme without using a buffer is not affected from the finite buffer size.
\begin{figure}[t]
\centering
\includegraphics[width=0.67\columnwidth]{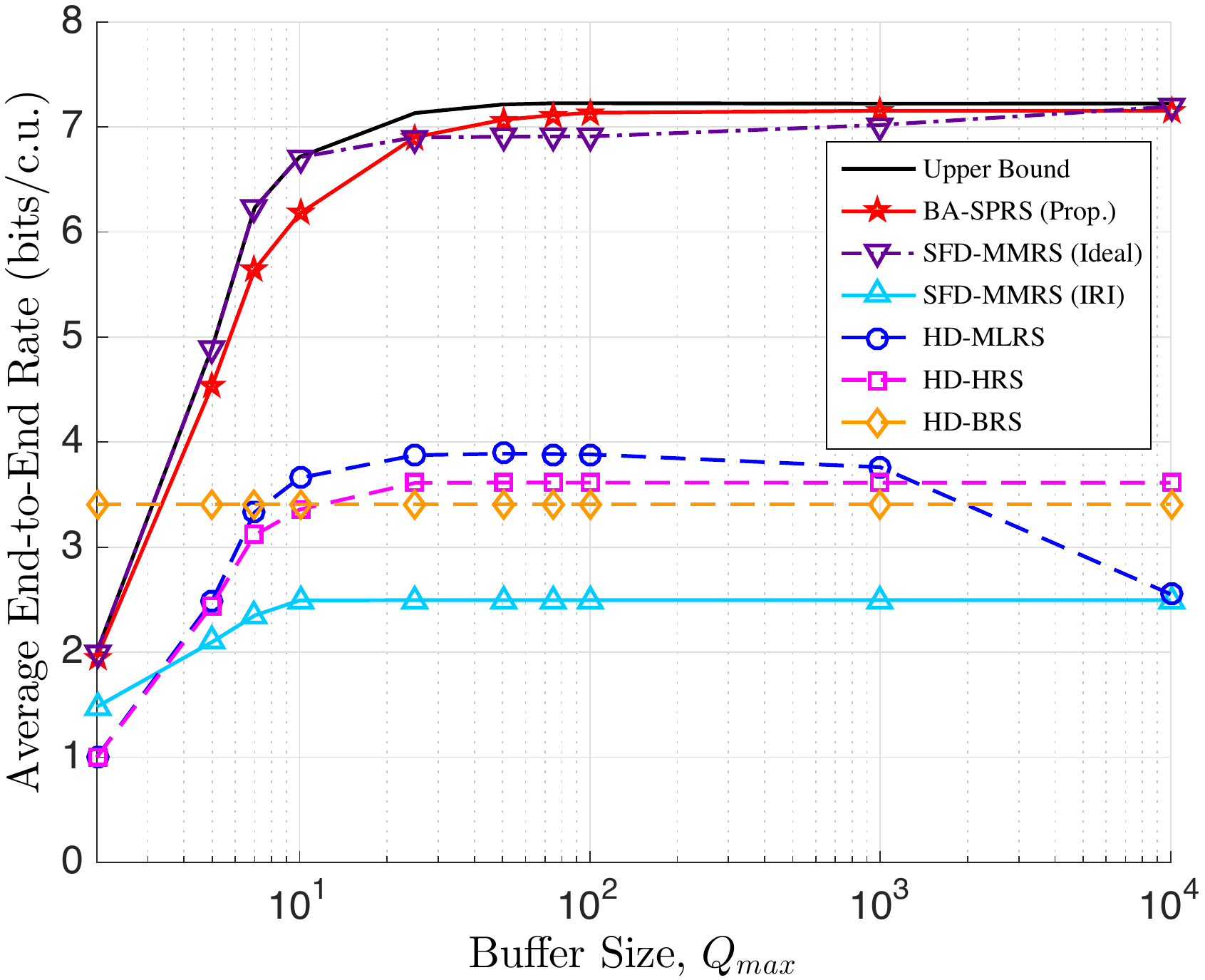}
\caption{Average end-to-end rate for varying the maximum buffer size $Q_{\mathrm{max}}$ when $K=3$, $\mathrm{SNR}=20$ dB, and $\sigma_{SR}^2=\sigma_{RD}^2=\sigma_{RR}^2 =0$ dB.}
\label{fig:art_rate_Qmax}
\end{figure}

\subsection{Fixed Rate Transmission}

For fixed rate transmission, we evaluate the proposed BA-PA relay selection (BA-PARS) scheme in terms of outage probability and average end-to-end rate. We additionally consider the following state-of-the-art scheme proposed for fixed rate transmission:
\begin{itemize}
\item Buffer-aided successive opportunistic relaying (BA-SOR) with IRI cancellation \cite{NOM2}.
\end{itemize}
For all simulation results with fixed rate transmission, we consider $\iid$ channel condition with the same IRI intensity, i.e., $\sigma_{SR}^2=\sigma_{RR}^2=\sigma_{RD}^2=0$.

Fig.~\ref{fig:pout_all} shows the outage probability\footnote{While an outage is defined in \cite{IKH3} when a minimum of channel gains of both $\SR$ and $\RD$ links is less than the capture ratio $\gamma_0$, we define the outage probability as a portion of successfully transmitted packets among the total number of transmitted packets since the previous definition is not rigorous for the case of concurrent transmissions with IRI.} with various SNR values for the transmission rate $C_0=1$ BPCU, three relays ($K=3$), and infinite length of buffer ($Q_{\mathrm{max}}\rightarrow\infty$).
Basically, the HD-BRS scheme has the worst outage performance due to lack of buffering. The HD-HRS scheme, a hybrid mode of HD-BRS and HD-MMRS, is always better than the HD-BRS scheme. Since the HD-MLRS scheme can achieve a full diversity (i.e., $2K$ diversity order) for HD transmission, it shows the best performance except for low SNR region. At low SNR, the ideal SFD-MMRS scheme without taking IRI consideration achieves slightly better performance than the HD-MLRS scheme. However, its outage performance is significantly degraded if IRI is imposed.
The BA-SOR scheme achieves a good performance at low and medium SNR but it becomes bad at high SNR since its relay selection criterion is to maximize the minimum SNR of both $\SR$ and $\RD$ links\footnote{In our outage definition, the BA-SOR scheme can be improved at high SNR if it employs to select the best $\RD$ link when the best SNR for $\SR$ link is worse than SNRs of all $\RD$ links, instead of a random selection, since the $\RD$ link separately contributes on the outage event.}.
The proposed BA-PARS scheme achieves similar performance to the BA-SOR scheme at low SNR but it is not degraded at high SNR thanks to a hybrid mode of IC and IM. In addition, assuming a powerful source node such as base station as we stated in Section~II, we depict the case of double power at the source for the proposed BA-PARS scheme, which shows that the proposed BA-PARS scheme can achieve the outage performance of the ideal SFD-MMRS scheme. Hence, if extra power at the source is available, the proposed BA-PARS scheme can provide the best outage performance.
Note that the ideal SFD-MMRS and the double powered BA-PARS schemes achieve a half diversity gain (i.e., $K$ diversity order) of the HD-MLRS scheme but a better power gain at low SNR, since for both schemes, a half rate is required at each link to meet the same transmission rate as the HD-MLRS scheme.
\begin{figure}[t]
\centering
\includegraphics[width=0.67\columnwidth]{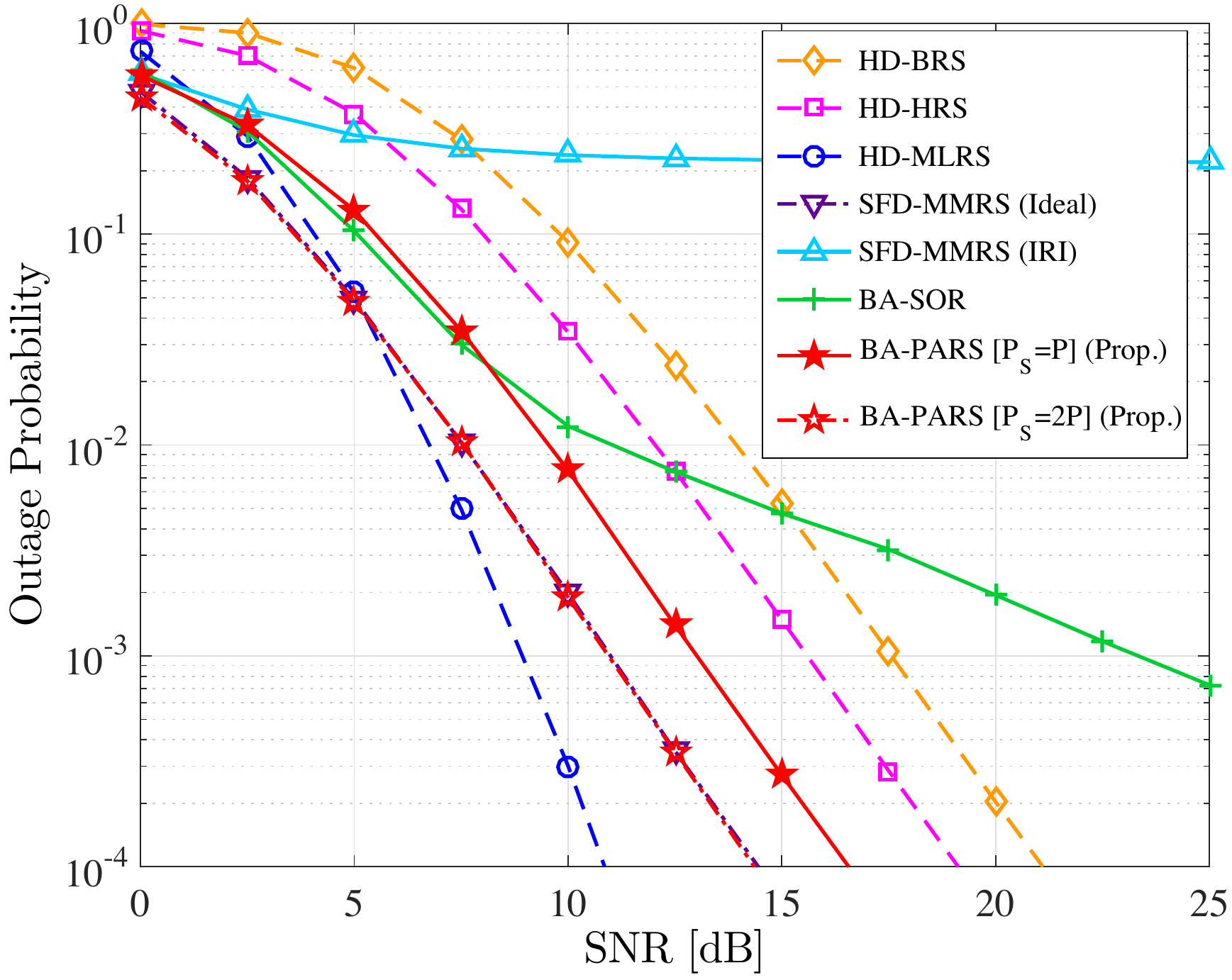}
\caption{Outage probability for $C_0=1$ BPCU, $K=3$, and $Q_{\mathrm{max}}\rightarrow\infty$. For the proposed BA-PARS scheme, we additionally consider double power at the source (denoted by `[2P]') to show the case of a powerful source node.}
\label{fig:pout_all}
\end{figure}

Fig.~\ref{fig:pout_var_buffer} shows the outage probability of the proposed BA-PARS scheme for varying the maximum buffer size when $C_0=1$ BPCU and $K=3$. As in \cite{IKH3,NOM2}, we assume that half of buffer elements are full at initial phase (in order to reach the steady-state queue lengths quicker). As the maximum buffer size $Q_{\max}$ increases, the outage performance is improved and converges to the case of having buffers of infinite length. The convergence occurs at lower buffer sizes at high SNR than at low SNR, since buffer full/empty events contribute more in outage events at high SNR due to sufficiently good received signal strengths (i.e., outage events due to bad channel conditions occur rarely and outage events are due to buffer full/empty events).
\begin{figure}[t]
\centering
\includegraphics[width=0.67\columnwidth]{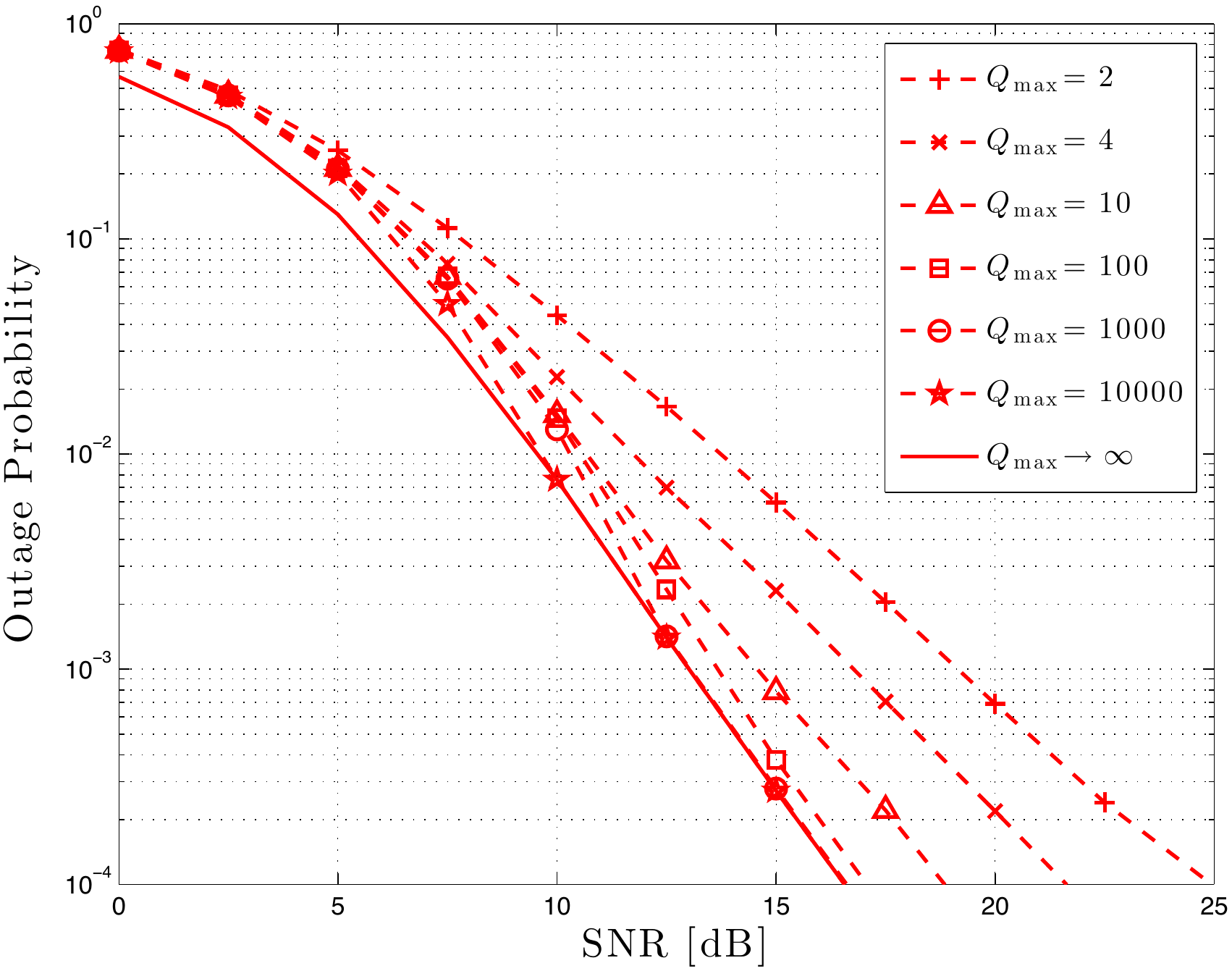}
\caption{Outage probability of the proposed BA-PARS scheme for varying the maximum buffer size $Q_{\mathrm{max}}$ when $C_0=1$ BPCU and $K=3$.}
\label{fig:pout_var_buffer}
\end{figure}

Fig.~\ref{fig:rate_all_frt} shows the average end-to-end achievable rate with three different transmission data rates ($C_0=1.5$ and $C_0=2.5$ BPCU) when $K=3$ and $Q_{\mathrm{max}}=10$.
The conventional HD schemes achieve a half of the data rate due to an HD limitation although the HD-MLRS scheme achieves a full diversity in outage performance.
While the ideal SFD-MMRS scheme obtains the best performance which achieves full data rates (1.5 and 2.5 BPCU) as SNR increases, the non-ideal SFD-MMRS scheme is significantly degraded due to IRI, which shows rather worse performance with $C_0 = 2.5$ BPCU than the HD schemes. The BA-SOR scheme can achieve the full data rate with $C_0 = 1.5$ BPCU at high SNR but significantly degrades with higher data rates.
In contrast, the proposed BA-PARS scheme can achieve the full data rates for all the cases such that it guarantees the required data rate if a proper data rate is chosen according to SNR, even if it has some gaps compared to the ideal SFD-MMRS scheme since the source power is split for IC/IM.
Similarly to the outage performance, if double power at the source is available, the proposed BA-PARS scheme can approach the achievable rate of the ideal SFD-MMRS scheme without suffering from IRI.

\begin{figure}[t]
\centering
\includegraphics[width=0.67\columnwidth]{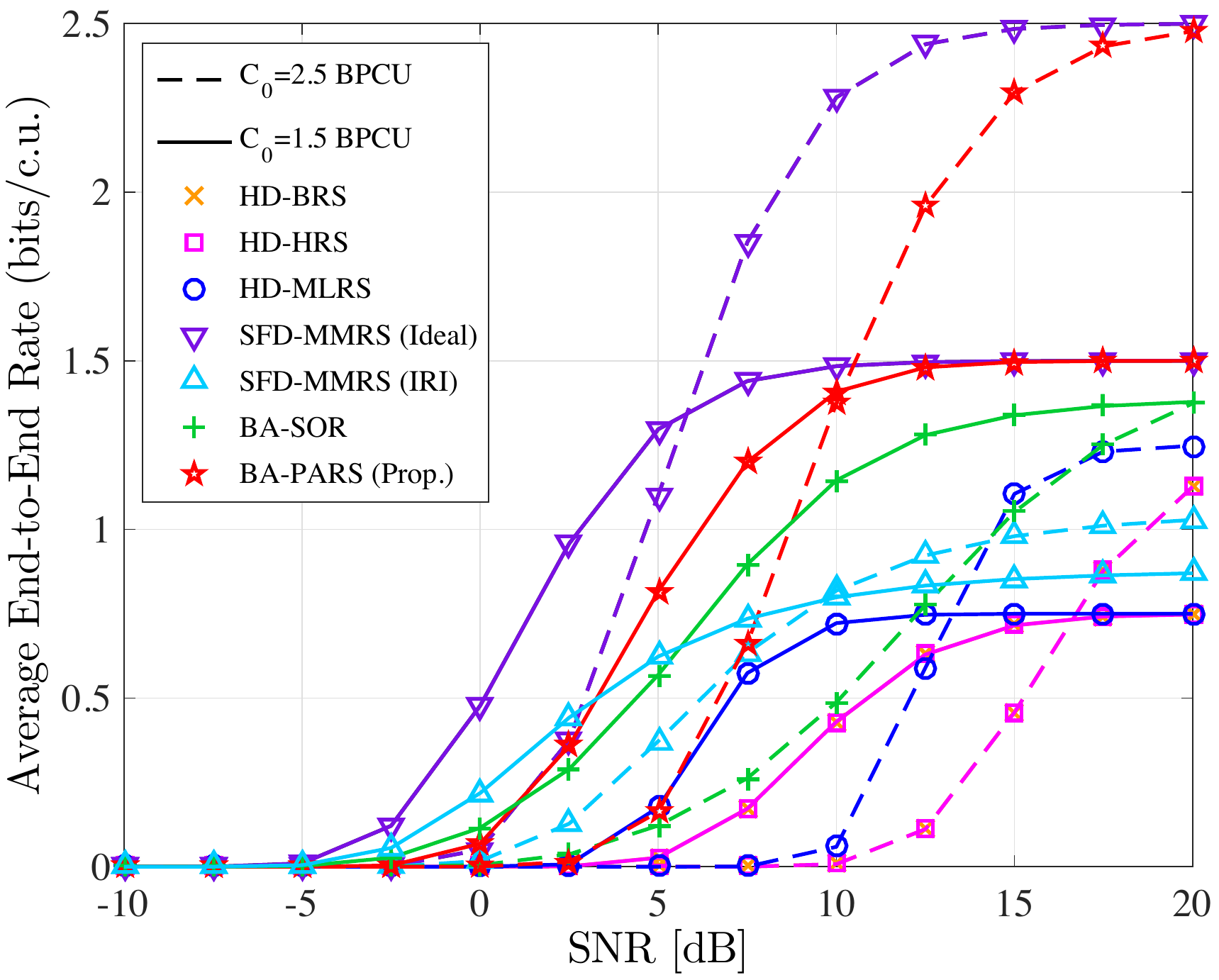}
\caption{Average end-to-end rate with two different fixed data rates ($C_0=1.5$ and $C_0=2.5$ BPCU) when $K=3$ and $Q_{\mathrm{max}}=10$.}
\label{fig:rate_all_frt}
\end{figure}

%
%
%
%
\section{Conclusions and Future Directions}\label{sec:conclusions}

\subsection{Conclusions}

In this work, we present two relay-pair selection policies, depending on the available CSI in the system, that employ a buffer-aided multi-antenna source, a cluster of HD buffer-aided relays and a destination. In the case of global CSI, a linear precoding strategy is applied by the source in order to mitigate  IRI. A relay pair is selected, such that the average end-to-end rate is maximized. In the case of CSIR, phase alignment is applied by the source in order to mitigate/cancel IRI. A relay pair is selected, such that the maximum end-to-end SINR is achieved. The benefits of this network deployment are demonstrated via a numerical evaluation, where  improved performance is observed with respect to the average end-to-end rate and outage probability, while the conventional non-ideal SFD-MMRS scheme with IRI is significantly degraded. 

\subsection{Future directions}

Part of ongoing research is to investigate scenarios where only statistical information is known about the CSI. 

It is clear that the two different channel state information cases considered (CSI and CSIR) result in different amounts of overhead. Part of ongoing work is to (a) demonstrate the impacts of this overhead on the proposed policies, and (b) when each case should be considered, given the channel coherence time and required throughput.

%
%
%
%
\appendices

%
%
\section{Proof of Proposition~\ref{propA}}\label{A:proof_propA}

After algebraic manipulations, optimization problem~\eqref{OPT0} can be written as
\begin{subequations}
\label{OPT1}
\begin{align}
\displaystyle \underset{\mathbf{m}_1,\mathbf{m}_2,\beta,\omega}{\max}  & ~  \frac{|\beta|^2}{(1-\omega^2)|h_{TR}|^2 + \rho} \label{1obj1}  \\[0.2cm]
\textrm{s.t.} & ~ 
\mathbf{h}_S^H \begin{bmatrix}
\mathbf{m}_1  & \mathbf{m}_2 \\
\end{bmatrix} =
\begin{bmatrix}
\beta  & -\omega h_{TR} \\
\end{bmatrix}, \label{1cond1} \\
& ~ \mathbf{m}_1^H \mathbf{m}_1 + \mathbf{m}_2^H \mathbf{m}_2 \leq 1, \label{1cond2}\\
& ~ \omega \in (0,1] . \label{1cond3}
\end{align}
\end{subequations}
where $\rho=\sigma^2/P$, and $\beta$, $\omega$ depend on the choice of $\mathbf{m}_1,\mathbf{m}_2$.

\begin{remark}
By a simple phase shift we observe that we can choose $m_{11}$ and $m_{22}$, such that $\beta$ is a real number; i.e., let $\theta \in [0,2\pi]$, such that $\beta \triangleq \beta' e^{j \theta}$; then, $|\beta|^2 = |\beta' e^{j \theta}|^2= |\beta'|^2$.
\end{remark}

Hence, optimization problem~\eqref{OPT1} can be written as
\begin{subequations}
\label{OPT2}
\begin{align}
\displaystyle \underset{\mathbf{m}_1,\mathbf{m}_2,\beta,\omega}{\max}  & ~  \frac{\beta^2}{(1-\omega^2)|h_{TR}|^2 + \rho} \label{2obj1}  \\[0.2cm]
\textrm{s.t.} & ~ 
\mathbf{h}_S^H \begin{bmatrix}
\mathbf{m}_1  & \mathbf{m}_2 \\
\end{bmatrix} =
\begin{bmatrix}
\beta  & -\omega h_{TR} \\
\end{bmatrix}, \label{2cond1} \\
& ~ \mathbf{m}_1^H \mathbf{m}_1 + \mathbf{m}_2^H \mathbf{m}_2 \leq 1, \label{2cond2} \\
& ~ \omega \in (0,1] . \label{2cond3}
\end{align}
\end{subequations}
Let $\omega\in (0,1]$ be a fixed value. Then, the problem becomes equivalent to maximizing $\beta^2$, i.e., 
\begin{subequations}
\label{OPT3}
\begin{align}
\displaystyle \underset{\mathbf{m}_1,\mathbf{m}_2,\beta}{\max}  & ~  \beta^2 \label{3obj1}  \\[0.2cm]
\textrm{s.t.} & ~ 
\mathbf{h}_S^H \begin{bmatrix}
\mathbf{m}_1  & \mathbf{m}_2 \\
\end{bmatrix} =
\begin{bmatrix}
\beta  & -\omega h_{TR} \\
\end{bmatrix}, \label{3cond1} \\
& ~ \mathbf{m}_1^H \mathbf{m}_1 + \mathbf{m}_2^H \mathbf{m}_2 \leq 1. \label{3cond2} 
\end{align}
\end{subequations}

Suppose $\mathbf{m}_2^H\mathbf{m}_2 =1-\alpha^2$, $\alpha>0$, where $\alpha$ will be identified later in the proof. Then, by conditions  \eqref{3cond1} and \eqref{3cond2} we have
\begin{subequations}
\begin{align}
\mathbf{h}_S^H \mathbf{m}_1 &=\beta,  \label{4cond1}  \\
 \mathbf{m}_1^H  \mathbf{m}_1  &\leq \alpha^2. \label{4cond2}
\end{align}
\end{subequations}
Since $\beta$ is maximized, condition~\eqref{4cond2} should be satisfied with equality. This will emerge in the sequel by contradiction. Assume $ \mathbf{m}_1^H  \mathbf{m}_1 = \alpha^2 -\varepsilon$ for some $\varepsilon >0$. Then, since both $\alpha^2 -\varepsilon$ and $\beta$ are real numbers we can easily deduce that 
\begin{align}
 \mathbf{m}_1 & =\frac{\alpha^2-\varepsilon}{\beta} \mathbf{h}_S. \label{eq:1pre}
\end{align}

Substituting \eqref{eq:1pre} into \eqref{4cond1},  $\beta^2 = (\alpha^2 -\varepsilon)\| \mathbf{h}_S \|^2$; $\beta$ is maximized when $\varepsilon =0$, so condition~\eqref{4cond2} is satisfied with equality. Thus,
\begin{align}\label{beta_a}
\beta = \alpha\| \mathbf{h}_S \|,
\end{align}
and
\begin{align}\label{m11a}
  \mathbf{m}_1 =\frac{\alpha^2}{\beta}\mathbf{h}_S.
\end{align}
Now, we need to find $\alpha$ and $\mathbf{m}_2$. It is observed in \eqref{beta_a} that $\alpha$ should be maximized in order to maximize $\beta$. Given that  $\mathbf{m}_2^H  \mathbf{m}_2=1-\alpha^2$, this is equivalent to minimizing $\mathbf{m}_2^H  \mathbf{m}_2$. Towards this end, we formulate the following optimization problem:
\begin{subequations}\label{OPT5}
\begin{align}
\displaystyle \min  & ~  \mathbf{m}_2^H  \mathbf{m}_2  ,\label{5obj1} \\
\textrm{s.t} & ~ 
\mathbf{h}_S^H \mathbf{m}_2 = -\omega h_{TR}. \label{5cond1}
\end{align}
\end{subequations}
By Cauchy-Schwartz inequality 
\begin{align}
|\mathbf{h}_S^H \mathbf{m}_2 |^2 \leq \|\mathbf{h}_S^H \|^2 \| \mathbf{m}_2 \|^2 \label{cauchy},
\end{align}
and by substituting \eqref{5cond1} and \eqref{5obj1} into \eqref{cauchy}, after algebraic manipulation,
\begin{align}
 \mathbf{m}_2^H \mathbf{m}_2  \geq \frac{\omega^2 |h_{TR}|^2}{\|\mathbf{h}_S \|^2} \label{cauchy1}.
\end{align}
Minimizing $ \mathbf{m}_2^H \mathbf{m}_2 $ can be achieved when \eqref{cauchy1} holds with equality, i.e., 
\begin{align}
 \mathbf{m}_2^H \mathbf{m}_2 = \frac{\omega^2 | h_{TR}|^2}{\|\mathbf{h}_S \|^2} \label{cauchy2}.
\end{align}
Combining \eqref{5cond1} and \eqref{cauchy2} we get
\begin{align}
\left( \mathbf{m}_2^H  + \frac{\omega  h_{TR}^H }{\|\mathbf{h}_S \|^2}\mathbf{h}_S^H \right) \mathbf{m}_2 =0 \label{cauchy3}.
\end{align}
Since $\mathbf{m}_2 \neq \mathbf{0}$, then
\begin{align}
\mathbf{m}_2 =-\omega\frac{h_{TR}}{\| \mathbf{h}_S \|^2}\mathbf{h}_S .
\end{align}
Combining \eqref{beta_a} and \eqref{cauchy2},  the maximum $\beta$ is given by
\begin{align}\label{beta_b}
\beta =  \alpha\| \mathbf{h}_S \|= \sqrt{\| \mathbf{h}_S \|^2 - \omega^2 | h_{TR} |^2}.
\end{align}
 Substituting $\alpha$ and $\beta$ into \eqref{m11a}, we have 
\begin{align}
\mathbf{m}_1 =\frac{ \sqrt{\| \mathbf{h}_S \|^2 - \omega^2 | h_{TR} |^2}}{\| \mathbf{h}_S \|^2}\mathbf{h}_S .
\end{align}
Finally, the precoding matrix $\mathbf{M}$ is given by
\begin{align*}
\mathbf{M} = \frac{1}{\| \mathbf{h}_S \|^2} \begin{bmatrix}
\sqrt{\| \mathbf{h}_S \|^2 - \omega^2 | h_{TR} |^2} \mathbf{h}_S,  &
-\omega h_{TR} \mathbf{h}_S  \end{bmatrix}.
\end{align*}

Now, we want to find the value of $\omega$ for which the SINR at the receiving relay is maximized. 
For both \eqref{cauchy2} and \eqref{beta_b}, it is required that $\omega^2 | h_{TR} |^2 < \| \mathbf{h}_S \|^2$. 
We substitute \eqref{beta_b} into optimization~\eqref{OPT2} and for simplicity of notation we denote $a=\|\mathbf{h}_S\|^2$ and $b=|h_{TR}|^2$.
Hence, optimization problem~\eqref{OPT2} is written as
\begin{subequations}
\begin{align}
\displaystyle \underset{\omega}{\max}  & ~  f(\omega) \triangleq \frac{a - b\omega^2}{b(1-\omega)^2 + \rho}, \label{OPT_AR_2}\\[0.14cm]
\textrm{s.t.} & ~ 0< \omega \leq \min\left\{1,{\sqrt{\frac{a}{b}}}\right\}, \label{omega:ineq:1}
\end{align}
\end{subequations}
where $\min\{\cdot,\cdot\}$ denotes the minimum of arguments and the right hand side of \eqref{omega:ineq:1} comes from the condition that {$\beta^2$ (and subsequently $f(\omega)$)} is non-negative.
By differentiating $f(\omega)$ with respect to $\omega$,
\begin{align}
\frac{d f(\omega)}{d \omega} &= \frac{-2b\omega \left(b(1-\omega)^2 + \rho\right) + 2b(1-\omega)(a-b\omega^2)}{\left( b(1-\omega)^2 + \rho \right)^2} \nonumber \\
&  =\frac{2b(b\omega^2 - (a+b+\rho)\omega + a )}{\left( b(1-\omega)^2 + \rho \right)^2} .
\end{align}
At a turning point, $\frac{d f(\omega)}{d \omega}|_{\omega=\omega^*} = 0$; hence,
\begin{align}\label{eq:2ndorder}
b{\omega^*}^2 - (a+b+\rho)\omega^* + a = 0. 
\end{align}
The two roots of \eqref{eq:2ndorder} are obtained by 
\begin{align}
\omega_{1,2} = \frac{(a+b+\rho) \pm \sqrt{(a+b+\rho)^2 - 4ab} }{2b}. \label{eq:roots}
\end{align}
First, we verify that the roots~\eqref{eq:roots} have real-values by checking if the second order equation in~\eqref{eq:2ndorder} has a positive discriminant, i.e., 
\begin{align*}
\Delta = (a+b+\rho)^2 - 4ab =(a-b)^2+\rho^2+2a\rho+2b\rho >0.
\end{align*}
Let $\omega_1$ be the smallest root. It can be easily shown that at $\omega_1$ we attain a maximum, whereas at $\omega_2$ we attain a minimum. In addition, {we need to show that any $\omega \geq \omega_2$ does not fulfill inequality constraint~\eqref{omega:ineq:1}, so that the maximum of $f(\omega)$ is not obtained on the boundary. We show this by contradiction. For $a<b$, suppose $\omega_2 < \sqrt{\frac{a}{b}}$. Then,
\begin{align*}
\omega_2 &=  \frac{(a+b+\rho) + \sqrt{(a+b+\rho)^2 - 4ab} }{2b} \\
&=  \frac{((\sqrt{a}-\sqrt{b})^2 + 2\sqrt{ab}+\rho) + \sqrt{(a+b+\rho)^2 - 4ab}  }{2b} \\
& \stackrel{(\phi)}> \frac{2\sqrt{ab}}{2b} = \sqrt{\frac{a}{b}}, 
\end{align*}
where $(\phi)$ stems from the fact that all the eliminated elements are positive. For $b\leq a$, suppose  $\omega_2 < 1$. Then,
\begin{align*}
\omega_2 &=  \frac{(a+b+\rho) + \sqrt{(a+b+\rho)^2 - 4ab} }{2b} \\
&=  \frac{(a+b+\rho) -2b + \sqrt{(a+b+\rho)^2 - 4ab}  }{2b}+1 \\
&=  \frac{(a-b+\rho)+ \sqrt{(a+b+\rho)^2 - 4ab}  }{2b}+1 \stackrel{(\psi)}>1, 
\end{align*}
where $(\psi)$ stems from the fact that all the eliminated elements are all positive.}
Hence, $\omega^*=\omega_1$, given it  fulfills inequality constraint~\eqref{omega:ineq:1}. It is easily shown that $\omega_1$ is positive, so we will check if $\omega_1 \leq {\min\left\{1,\sqrt{ \frac{a}{b}} \right\}}$. For $a \geq b$ we need to check if $\omega_1 \leq 1$. Suppose $\omega_1 > 1$; then,
\begin{align*}
& \frac{(a+b+\rho) - \sqrt{(a+b+\rho)^2 - 4ab} }{2b}> 1 \\
\Leftrightarrow~& (a+b+\rho) - 2b > \sqrt{(a+b+\rho)^2 - 4ab}\\
\stackrel{(\xi)}{\Leftrightarrow}~& (a-b+\rho)^2 > (a+b+\rho)^2 - 4ab \\
\Leftrightarrow~& 2(a+\rho)(-2b) > -4ab \\
\Leftrightarrow~& \rho < 0,
\end{align*}
which contradicts the fact that $\rho > 0$. Step $(\xi)$ follows because $a-b+\rho >0$ ($\because a\geq b$). For $a < b$ we need to check if {$\omega_1 \leq\sqrt{ \frac{a}{b}} $}. Suppose {$\omega_1 > \sqrt{\frac{a}{b}}$}; then,
{\begin{align*}
& \frac{(a+b+\rho) - \sqrt{(a+b+\rho)^2 - 4ab} }{2b}>\sqrt{ \frac{a}{b}} \\
\Leftrightarrow~& (a+b+\rho) - 2\sqrt{ab} > \sqrt{(a+b+\rho)^2 - 4ab}\\
\stackrel{(\zeta)}{\Leftrightarrow}~& \big((a+b+\rho) - 2\sqrt{ab}\big)^2 > (a+b+\rho)^2 - 4ab \\
\Leftrightarrow~& ((b+\rho)-a)^2 - ((b+\rho)+a)^2  > - 4ab \\
\Leftrightarrow~&  2(b+\rho)(-2a)> - 4ab \\
\Leftrightarrow~& \rho < 0,
\end{align*}
which contradicts the fact that $\rho > 0$. Step $(\zeta)$ follows because $a+b+\rho -2\sqrt{ab} = (\sqrt{a}-\sqrt{b})^2+\rho>0$}. 
\hfill $\Box$

%
%
\section{Proof of Proposition~\ref{propB}}\label{B:proof_propB}
It can be easily shown that the choice of $P_S$ changes neither the precoding matrix $\mathbf{M}$ nor the value of $\beta$. Let $c\triangleq \sqrt{P_T}$ and $x\triangleq \sqrt{P_S}$. Then, the optimization problem in which power level selection is also possible can be written as
\begin{subequations}
\begin{align}
\displaystyle \underset{\omega, x}{\max}  & ~  h(\omega,x) \triangleq \frac{(a - b\omega^2)x^2}{b(c-\omega x)^2 + \sigma^2}, \label{OPT_AR_3}\\[0.14cm]
\textrm{s.t.} & ~ 0< \omega \leq \min\left\{1,\sqrt{\frac{a}{b}}\right\}, \label{omega:ineq:11} \\
& ~ 0< x \leq M, \label{omega:ineq:12}
\end{align}
\end{subequations}
where $M \triangleq \sqrt{P_{\max}}$. Differentiating $h(\omega,x)$ w.r.t. $x$, the value of $x$ that maximizes $h(\omega,x)$ is given by
\begin{align}
x^* = \frac{bc^2 +\sigma^2}{bc \omega}. \label{omega_x1}
\end{align}
Differentiating $h(\omega,x)$ w.r.t. $\omega$ and by following similar steps to that of Proposition~\ref{propA}, the value of $\omega$ that maximizes $h(\omega,x)$ is given by
\begin{align}
\omega^* = \frac{ax^2+bc^2 +\sigma^2 - \sqrt{(ax^2+bc^2 +\sigma^2)^2 -4abc^2 x^2}}{2bcx}. \label{omega_star1}
\end{align}
The solution of \eqref{omega_x1}-\eqref{omega_star1} gives $\omega^* =0$, suggesting that $x^*=\infty$, i.e., the source uses infinite power. However, power is constrained and this solution is not feasible. By substituting $x^*$ of \eqref{omega_x1} in \eqref{OPT_AR_3}, we obtain 
\begin{align}
h(\omega)\triangleq \frac{(a-b\omega^2)(bc^2+\sigma^2)}{b \sigma^2 \omega^2} ,
\end{align}
which is monotonically decreasing with $\omega$. Hence, in order to maximize the SINR with respect to $x$, $\omega$ should be kept as small as possible, while satisfying \eqref{omega_x1}. This means that the value of $x$ that maximizes the SINR is at $x=M$. Let $\omega_a$ be the value computed via \eqref{omega_x1} for $x=M$.  For $x=M$, the value of $\omega$ that maximizes the SINR with respect to $\omega$, say $\omega_b$, can be computed via \eqref{omega_star1}. Then,
\begin{itemize}
\item[(a)] if $\omega_b>\omega_a$, one can find $x_b<M$ that increases SINR (since the maximum of the SINR with respect to $x$ is shifted to a lower value); but then, $h(\omega_a, M) > h(\omega_b,x_b)$ which means that if $\omega_b>\omega_a$ the maximum value is always obtained at $h(\omega_a, M)$;
\item[(b)]  if $\omega_b<\omega_a$, then $M<x^*$ and the hence $x=M$; it cannot be improved further since $h(\omega, x)$ is an increasing function with respect to $x$ until $x=\frac{bc^2+\sigma^2}{b\omega_b}$. Hence, $\omega_b$ finds the optimal value of the SINR with respect to $\omega$, while it also increases the SINR with respect to $x$.
\end{itemize}
Hence, the source should \emph{always} transmit with $P_{\max}$. In case, $P_{\max}$ is small such that \eqref{omega_x1} is not satisfied and \eqref{omega_star1} does not yield a feasible $\omega$, then the maximum SINR is achieved for $\omega$ on the boundary, i.e., $\omega=\min\{1,\sqrt{\frac{a}{b}}\}$, since $h(\omega, x)$ is an increasing function with respect to $\omega$, for $\omega\leq \omega^*$ (as given by \eqref{omega_star1}).
\hfill $\Box$

%
%
\section{Proof for Proposition~\ref{propC}}\label{C:proof_propC}

\noindent By triangle inequality
\begin{align}\label{triangle}
\norm{\frac{h_{S_2R}}{\sqrt{2}}e^{j\phi} +h_{TR}} \geq \norm{ \norm{\frac{h_{S_2R}}{\sqrt{2}}} - \norm{h_{TR}}}.
\end{align}
\noindent The optimization problem 
\begin{align}\label{opt:1a}
\min_{\phi} \norm{\frac{h_{S_2R}}{\sqrt{2}}e^{j\phi} +h_{TR}}^2 ,
\end{align}
is minimized when inequality \eqref{triangle} holds with equality; this occurs when $h_{S_2R}$ is in phase with $-h_{TR}$. Let $\phi^\star$ the optimal angle $\phi$ for optimization \eqref{opt:1a}. Since $\norm{e^{j\phi}}=1$, then the minimization yields
\begin{align*}
e^{j\phi^\star}= -\frac{h_{S_2R}^H h_{TR}}{|h_{S_2R}^H h_{TR}|}.
\end{align*}
Similarly, by triangle equality 
\begin{align}\label{triangle1}
\norm{\frac{h_{S_2R}}{\sqrt{2}}e^{j\phi} +h_{TR}} \leq \norm{\frac{h_{S_2R}}{\sqrt{2}}} + \norm{h_{TR}}.
\end{align}
\noindent The optimization problem 
\begin{align}\label{opt:1b}
\max_{\phi} \norm{\frac{h_{S_2R}}{\sqrt{2}}e^{j\phi} +h_{TR}}^2 ,
\end{align}
is maximized when inequality \eqref{triangle1} holds with equality; this occurs when $h_{S_2R}$ is in phase with $h_{TR}$. Let $\phi^\dagger$ the optimal angle $\phi$ for optimization \eqref{opt:1b}. Since $\norm{e^{j\phi}}=1$, the maximization yields
\begin{align*}
e^{j\phi^\dagger}=\frac{h_{S_2R}^H h_{TR}}{|h_{S_2R}^H h_{TR}|} .
\end{align*}

The proof is now complete.
\hfill $\Box$


\section*{References}
\bibliographystyle{elsarticle-num}

\begin{thebibliography}{99}

\bibitem{LAN} J. N. Laneman, D. N. C. Tse, and G. W. Wornell, ``Cooperative diversity in wireless networks: Efficient protocols and outage behavior'', \emph{IEEE Trans. Inform. Theory}, vol. 50, pp. 3062--3080, Dec. 2004.

\bibitem{BLE} A. Bletsas, A. Khisti, D. Reed, and A. Lippman, ``A simple cooperative diversity method based on network path selection,'' {\it IEEE J. Select. Areas Commun.}, vol. 24, pp. 659--672, Mar. 2006. 

\bibitem{BLE1} A. Bletsas, H. Shin, and M. Z. Win, ``Cooperative communications with outage-optimal opportunistic relaying,'' {\it IEEE Trans. Wireless Commun.}, vol. 6, pp. 3450--3460, Sept. 2007.

\bibitem{KAR} D. S. Michalopoulos and G. K. Karagiannidis, ``Performance analysis of single relay selection in Rayleigh fading,'' {\it IEEE Trans. Wireless Commun.}, vol. 7, pp. 3718--3724, Oct. 2008. 

\bibitem{KRI1} I. Krikidis, J. S. Thompson, S. McLaughlin, and N. G\"{o}rtz, ``Max-min relay selection for legacy amplify-and-forward systems with interference,'' {\it IEEE Trans. Wireless Commun.}, vol. 8, pp. 3016--3027, June 2009.

\bibitem{DIN} Z. Ding, I. Krikidis, B. Rong, J. S. Thompson, C. Wang, and S. Yang, ``On combating the half-duplex constraint in modern cooperative networks: protocols and techniques,'' \emph{IEEE Trans. Wireless Commun.}, vol. 19, pp. 20--27, Dec. 2012. 

\bibitem{FAN} Y. Fan, C. Wang, J. S. Thompson, and H. V. Poor, ``Recovering multiplexing loss through successive relaying using repetition coding,'' \emph{IEEE Trans. Wireless Commun.}, vol. 6, pp. 4484--4493, Dec. 2007. 

\bibitem{chao_ISIT} C. Wang, Y. Fan, I. Krikidis, J. S. Thompson, and H. V. Poor, ``Superposition-coded concurrent decode-and-forward relaying,'' \emph{Proc. IEEE Int. Symp. Inf. Theory (ISIT)}, Toronto, Canada, July 2008,  pp. 2390--2394.

\bibitem{survey} N. Nomikos, T. Charalambous, I. Krikidis, D. N. Skoutas, D. Vouyioukas, M. Johansson, and C. Skianis, ``A survey on buffer-aided relay selection,'' \emph{Commun. Surveys Tuts.}, vol. 18, no. 2, pp. 1073--1097, 2016.

\bibitem{ZLA_CM} N. Zlatanov, A. Ikhlef, T. Islam and R. Schober, ``Buffer-aided cooperative communications: opportunities and challenges,'' {\it IEEE Commun. Mag.} , vol. 52, no. 4, pp.146-153, April 2014.

\bibitem{IKH2} A. Ikhlef, D. S. Michalopoulos, and R. Schober, ``Max-max relay selection for relays with buffers," {\it IEEE Trans. Wireless Commun.}, vol. 11, no. 3, pp. 1124-1135, Mar. 2012.

\bibitem{KRICHAR} I. Krikidis, T. Charalambous, and J. S. Thompson, ``Buffer-aided relay selection for cooperative diversity systems without delay constraints," {\it IEEE Trans. Wireless Commun.}, vol.11, no.5, pp.1957--1967, May 2012.

\bibitem{CHAR_CL} T. Charalambous, N. Nomikos, I. Krikidis, D. Vouyioukas and M. Johansson, ``Modeling buffer-aided relay selection in networks with direct transmission capability,'' \emph{IEEE Commun. Lett.}, vol. 19, no. 4, pp. 649-652, April 2015.

\bibitem{SHAQ_TWC} M. Shaqfeh, A. Zafar, H. Alnuweiri and M. S. Alouini, ``Maximizing expected achievable rates for block-fading buffer-aided relay channels,'' \emph{IEEE Trans. Wireless Commun.}, vol. 15, no. 9, pp. 5919--5931, Sept. 2016.

\bibitem{OIWA_TVT} M. Oiwa, C. Tosa and S. Sugiura, ``Theoretical analysis of hybrid buffer-aided cooperative protocol based on max--max and max--link relay selections,'' \emph{IEEE Trans. on Vehicular Tech.}, vol. 65, no. 11, pp. 9236--9246, Nov. 2016.

\bibitem{OIWA_ACCESS} {M. Oiwa and S. Sugiura, ``Reduced-packet-delay generalized buffer-aided relaying protocol: Simultaneous activation of multiple source-to-relay links,'' \emph{IEEE Access}, vol. 4, no. , pp. 3632--3646, 2016.}

\bibitem{NOM_TCOM2018} {N. Nomikos, T. Charalambous, D. Vouyioukas and G. K. Karagiannidis, ``Low-complexity buffer-aided link selection with outdated CSI and feedback errors,'' \emph{IEEE Trans. Commun.}, 2018.}

\bibitem{POUL1} D. Poulimeneas, T. Charalambous, N. Nomikos, I. Krikidis, D. Vouyioukas and M. Johansson, ``A delay-aware hybrid relay selection policy,'' \emph{Internat. Conf. Telecomms (ICT)}, 2016, pp. 1--5.

\bibitem{TIAN_TVT} Z. Tian, Y. Gong, G. Chen AND J. Chambers, ``Buffer-aided relay selection with reduced packet delay in cooperative networks,'' \emph{IEEE Trans.  Vehicular Tech.}, vol. 66, no. 3, pp. 2567--2575, March 2017.

\bibitem{LIN_TVT} S. L. Lin and K. H. Liu, `Relay selection for cooperative relaying networks with small buffers,'' \emph{IEEE Trans. Vehicular Tech.}, vol. 65, no. 8, pp. 6562--6572, Aug. 2016.

\bibitem{POUL2} D. Poulimeneas, T. Charalambous, N. Nomikos, I. Krikidis, D. Vouyioukas and M. Johansson, ``Delay- and diversity-aware buffer-aided relay selection policies in cooperative networks,'' IEEE Wireless Commun. and Networking Conf. (WCNC), 2016, pp. 1--6.

\bibitem{IKH3} A. Ikhlef, J. Kim, and R. Schober, ``Mimicking full-duplex relaying using half-duplex relays with buffers," {\it IEEE Trans. Vehicular Tech.}, vol. 61, no. 7, pp. 3025--3037, Sept. 2012.

\bibitem{NOM2} N. Nomikos, D. Vouyioukas, T. Charalambous, I. Krikidis, P. Makris, D. N. Skoutas, M. Johansson and C. Skianis, ``Joint relay-pair selection for buffer-aided successive opportunistic relaying,'' {\it Wiley-Blackwell Trans. Emerg. Telecom. Techn.}, vol. 25, no. 8, pp. 823--834, Aug. 2014.

\bibitem{NOM_TCOM} N. Nomikos, T. Charalambous, I. Krikidis, D. N. Skoutas, D. Vouyioukas and M. Johansson, ``A buffer-aided successive opportunistic relay selection scheme with power adaptation and inter-relay interference cancellation for cooperative diversity systems,''  \emph{IEEE Trans. on Commun.}, vol. 63, no. 5, pp. 1623--1634, May 2015.

\bibitem{SIM_TWC} R. Simoni, V. Jamali, N. Zlatanov, R. Schober, L. Pierucci and R. Fantacci, ``Buffer-aided diamond relay network with block fading and inter-relay interference,'' \emph{IEEE Trans. Wireless Commun.}, vol. 15, no. 11, pp. 7357--7372, Nov. 2016.

\bibitem{KIM} S. M. Kim and M. Bengtsson, ``Virtual full-duplex buffer-aided relaying -- relay selection and beamforming,'' {\it IEEE Pers. Indoor. Mob. Rad. Comm. (PIMRC)}, London, United Kingdom, Sept. 2013, pp. 1748--1752. 

\bibitem{KIM2} S. M. Kim and M. Bengtsson, ``Virtual full-duplex buffer-aided relaying in the presence of inter-relay interference,'' {\it IEEE Trans. Wireless Commun.}, vol. 15, no. 4, pp. 2966--2980, Apr. 2016.

\bibitem{NOM_FD1} N. Nomikos, T. Charalambous, I. Krikidis, D. Vouyioukas and M. Johansson, ``Hybrid cooperation through full-duplex opportunistic relaying and max-link relay selection with transmit power adaptation,'' \emph{ IEEE Int. Conf. on Comm. (ICC)}, June 2014, pp. 5706--5711.

\bibitem{ZLA_CL2016} {N. Zlatanov, D. Hranilovic and J. S. Evans, ``Buffer-aided relaying improves throughput of full-duplex relay networks with fixed-rate transmissions,'' \emph{IEEE Commun. Lett.}, vol. 20, no. 12, pp. 2446--2449, Dec. 2016.}

\bibitem{LIU_TCOM} {R. Liu, P. Popovski and G. Wang, ``Decoupled uplink and downlink in a wireless system with buffer-aided relaying,'' \emph{IEEE Trans. on Commun.},  vol. 65, no. 4, pp. 1507--1517, April 2017.}

\bibitem{ZHANG_TCOM2017} {Q. Zhang, Z. Liang, Q. Li and J. Qin, ``Buffer-Aided Non-Orthogonal Multiple Access Relaying Systems in Rayleigh Fading Channels,'' \emph{IEEE Trans. on Commun.}, vol. 65, no. 1, pp. 95--106, Jan. 2017.}

\bibitem{NOM_GCW} {N. Nomikos, T. Charalambous, D. Vouyioukas, G. K. Karagiannidis and R. Wichman, ``Relay Selection for Buffer-Aided Non-Orthogonal Multiple Access Networks,'' \emph{IEEE Globecom Workshops}, 2017, pp. 1--6.}

\bibitem{CAO_TMC2018} {H. Cao, J. Cai, S. Huang and Y. Lu, ``Online adaptive transmission strategy for buffer-aided cooperative NOMA systems,'' \emph{IEEE Trans. Mob. Comp.}, 2018.}

\bibitem{DSSC1} D. S. Michalopoulos and G. K. Karagiannidis, ``Selective cooperative relaying over time-varying channels,'' {\it IEEE Trans. Commun.}, vol. 58, no. 8 pp. 2402--2412, Aug. 2010.

\bibitem{NOM_CAMAD} N. Nomikos, P. Makris, D. Vouyioukas, D. N. Skoutas and C. Skianis, ``Distributed joint relay-pair selection for buffer-aided successive opportunistic relaying,'' {\it IEEE Int. Works. on Comp. Aid. Mod. Anal. and Des. (CAMAD)}, Berlin, Germany, Sept. 2013, pp. 185--189.

\bibitem{DSSC2} {M. D. Selvaraj and S. R, ``Performance of hybrid selection and switch-and-stay combining with decode-and-forward relaying,'' \emph{IEEE Commun. Lett.}, vol. 18, no. 12, pp. 2233--2236, Dec. 2014.}

\bibitem{P1} R. W. Heath Jr and A. J. Paulraj, ``A simple scheme for transmit diversity using partial channel feedback,'' {\it Asilomar Conf. Signals, Systems and Computers}, Pacific Grove, CA, Nov. 1998, pp. 1073--1078.

\bibitem{P2}  K. K. Mukkavilli and A. Sabharwal and B. Aazhang, ``Design of multiple antenna coding schemes with channel feedback,'' {\it Asilomar Conf. Signals, Systems and Computers}, vol. 2, Pacific Grove, CA, Nov. 2001, pp. 1009--1013.

\bibitem{P3} Technical Specification Group Radio Access Network, 3GPP TS 25.214, Physical layer procedures (FDD). \url{http://www.3gpp.org/DynaReport/25214.htm}.

\bibitem{P4} H. Gerlach, ``SNR loss due to feedback quantization and errors in closed loop transmit diversity systems,''  {\it IEEE Pers. Indoor. Mob. Rad. Comm. (PIMRC)}, Sept. 2002, pp. 2117--2120.

\bibitem{BEL} S. Yang and J.-C. Belfiore, ``Towards the optimal amplify-and-forward cooperative diversity scheme,'' {\it IEEE Trans. Inform. Theory}, vol. 53, pp. 3114--3126, Sept. 2007.

\bibitem{KRI2} I. Krikidis, J. S. Thompson, S. McLaughlin, and N. Goertz, ``Amplify-and-forward with partial relay selection,'' {\it IEEE Commun. Lett.}, vol. 12, pp. 235--237, Apr. 2008.

\bibitem{BEN} D. Benevides da Costa and S. Aissa, ``End-to-end performance of dual-hop semi-blind relaying systems with partial relay selection,'' {\it IEEE Trans. Wireless Commun.}, vol. 8, no. 8, pp. 4306--4315, Aug. 2009. 

\bibitem{BEN2} D. Benevides da Costa, ``Performance analysis of relay selection techniques with clustered fixed-gain relays,'' {\it IEEE Sign. Proc. Lett.}, vol. 17, pp. 201--204, Feb. 2010. 

\bibitem{ML-AF} Z. Tian, G. Chen, Y. Gong, Z. Chen, J. Chambers, ``Buffer-aided max-link relay selection in amplify-and-forward cooperative networks,''  {\it IEEE Trans. Vehicular Tech.}, vol. 64, no. 2, pp. 553--565, Feb. 2015.





\end{thebibliography}
\end{document}